\newtheorem{theorem}{Theorem}[section]
\newtheorem{lemma}[theorem]{Lemma}
\newtheorem{proposition}[theorem]{Proposition}
\newtheorem{example}[theorem]{Example}
\theoremstyle{definition}
\newtheorem{definition}[theorem]{Definition}
\newtheorem{remark}[theorem]{Remark}
\newcommand{\N}{\mathbb N}
\newcommand{\F}{\mathbb F}
\newcommand{\ovom}{{\ensuremath{\overline{\omega}}}}
\newcommand{\om}{{\ensuremath{\omega}}}
\def\wt{\mathop{{\rm wt}}}
\def\wth{\textnormal{{\rm wt}}}
\def\tr{\mathop{{\rm tr}}}
\def\H{\mathop{{\rm H}}}
\def\dist{\mathop{{\rm dist}}}
\def\min{\mathop{{\rm min}}}
\def\F{{\mathbb F}}
\def\u{{\mathbf{u}}}
\def\v{{\mathbf{v}}}
\def\0{{\mathbf{0}}}
\newcommand{\cR}{\mathcal{R}}
\numberwithin{equation}{section}
\title[From Skew-Cyclic Codes to Asymmetric Quantum Codes]
      {From Skew-Cyclic Codes to Asymmetric Quantum Codes}
\author[Martianus Frederic Ezerman, San Ling, Patrick Sol{\'e} and Olfa Yemen]{}
\subjclass{Primary: 58F15, 58F17; Secondary: 53C35.}
 \keywords{Additive codes, best-known linear codes, cyclic codes, 
quantum codes, Reed-Solomon codes, self-orthogonal codes, skew-cyclic codes.}
 \email{mart0005@ntu.edu.sg}
 \email{lingsan@ntu.edu.sg}
 \email{sole@enst.fr}
 \email{olfa\underline{ }yemen@yahoo.fr}
\thanks{}
\begin{document}
\maketitle

\centerline{\scshape Martianus Frederic Ezerman and San Ling}
\medskip
{\footnotesize
 \centerline{Division of Mathematical Sciences,}
   \centerline{School of Physical and Mathematical Sciences, Nanyang Technological University,}
   \centerline{21 Nanyang Link, Singapore 637371, Republic of Singapore}
} 

\medskip

\centerline{\scshape Patrick Sol{\'e}}
\medskip
{\footnotesize
 \centerline{Centre National de la Recherche Scientifique (CNRS/LTCI),}
   \centerline{Telecom-ParisTech, Dept Comelec,}
   \centerline{46 rue Barrault, 75634 Paris, cedex France}
}

\medskip

\centerline{\scshape Olfa Yemen}
\medskip
{\footnotesize
 \centerline{Institut Pr{\'e}paratoire aux {\'E}tudes d'Ing{\'e}nieurs El Manar,}
   \centerline{Campus Universitaire El Manar, Tunis, Tunisia}
}

\bigskip

 \centerline{(Communicated by the associate editor name)}

\begin{abstract}
We introduce an additive but not $\F_{4}$-linear map $S$ from $\F_{4}^{n}$ 
to $\F_{4}^{2n}$ and exhibit some of its interesting 
structural properties. If $C$ is a linear $[n,k,d]_4$-code, then 
$S(C)$ is an additive $(2n,2^{2k},2d)_4$-code. If $C$ is an additive 
cyclic code then $S(C)$ is an additive quasi-cyclic code of index $2$.
Moreover, if $C$ is a module $\theta$-cyclic code, a recently 
introduced type of code which will be explained below, then $S(C)$ is 
equivalent to an additive cyclic code if $n$ is odd and 
to an additive quasi-cyclic code of index $2$ if $n$ is even.
Given any $(n,M,d)_4$-code $C$, the code $S(C)$ is 
self-orthogonal under the trace Hermitian inner product. Since 
the mapping $S$ preserves nestedness, it can be used as a tool 
in constructing additive asymmetric quantum codes.
\end{abstract}

\section{Introduction}\label{sec:Intro}
The class of \textit{skew-cyclic codes} was introduced in~\cite{BGU07}. 
These linear codes have the property of being invariant under the 
operation of cyclic shift composed with overall conjugation. 
Demanding an ideal structure on the codes forces us, over $\F_{4}$, 
to work in even lengths only. By relaxing this structure to that of a 
module~\cite{BU09}, it is now possible to deal with skew-cyclic 
codes of any lengths.

In the present work, a mapping $S$ is introduced to map any skew-cyclic 
codes of length $n$ over $\F_{4}$ into codes of length $2n$ which are 
invariant under a coordinate permutation denoted by $\sigma$. 
The permutation $\sigma$ is a cyclic permutation for $n$ odd and a 
product of two cycles of equal length for $n$ even.

Besides these structural properties, the mapping $S$ has interesting 
duality properties and preserves nestedness. These allow us to construct 
\textit{asymmetric quantum codes} following the method given in~\cite{ELS10}.

The material is organized as follows. In Section~\ref{sec:Prelims}, 
we state some basic definitions and properties of linear and 
additive codes. More specifically, the two families, $\mathbf{4^{\H}}$ 
and $\mathbf{4^{\H+}}$, of codes over $\F_{4}$ are formally defined. 
Their respective dualities and weight enumerators are stated.

Section~\ref{sec:MapS} introduces the mapping $S$ and its basic properties. 
The definition of and some algebraic background on module $\theta$-cyclic 
codes are discussed in Section~\ref{sec:Skew}. The study of the images of 
these codes under the mapping $S$ is also given. A very brief 
introduction to asymmetric quantum codes follows in Section~\ref{sec:AQC}.

In Section~\ref{sec:AnalysisWE}, an analysis of the weight enumerators 
is performed. This is important in understanding the parameters of the 
asymmetric quantum codes that can be obtained under the mapping $S$. 
Two systematic constructions of asymmetric quantum codes are given in 
Section~\ref{sec:Construct}. The one based on best-known linear codes 
is presented in Subsection~\ref{subsec:BKLC} while the other one, based on 
concatenated Reed-Solomon codes, is given in 
Subsection~\ref{subsec:RSCodes}. The last section contains conclusions 
and open problems.

\section{Preliminaries}\label{sec:Prelims}
Let $p$ be a prime and $q = p^m$ for some positive integer $m$. 
An $[n,k,d]_q$-linear code $C$ of length $n$, dimension $k$, 
and minimum distance $d$ is a subspace of dimension $k$ of the 
vector space $\F_q^n$ over the finite field $\F_q=GF(q)$ with 
$q$ elements. For a general, not necessarily linear, code $C$, 
the notation $(n,M=|C|,d)_q$ is commonly used.

A linear $[n,k,d]_q$-code $C$ is said to be \textit{cyclic} if $C$ 
is invariant under the cyclic shift. That is, whenever 
$\v=(v_0,v_1,\ldots,v_{n-2},v_{n-1}) \in C$, we have
\begin{equation*}
\v'=(v_{n-1},v_0,v_1\cdots,v_{n-2}) \in C \text{.}
\end{equation*}

Let $n$ be a positive integer and let $1 \leq l < n$ be a divisor of $n$. 
A linear $[n,k,d]_q$-code $C$ is \textit{quasi-cyclic of index $l$} or 
\textit{$l$-quasi-cyclic} if
\begin{equation*}
\v''=(v_{n-l},v_{n-l+1},\ldots,v_{n-1},v_{0},v_{1},\ldots,v_{n-l-1}) \in C
\end{equation*}
whenever $\v=(v_0,v_1,\ldots,v_{n-1}) \in C$. In particular, a 
$1$-quasi-cyclic code is a cyclic code.

As is the case for linear codes, we define the notions of an 
\textit{additive cyclic code} and an \textit{additive quasi-cyclic code} 
similarly by requiring the code to be additive, instead of linear.

The \textit{Hamming weight} of a vector or a codeword $\v$ in a code 
$C$, denoted by $\wt_H(\v)$, is the number of its nonzero entries. 
Given two elements $\u,\v \in C$, the number of positions where 
their respective entries disagree, written as $\dist_H(\u,\v)$, 
is called the \textit{Hamming distance} of $\u$ and $\v$. For any 
code $C$, the \textit{minimum distance} $d = d(C)$ is given by 
$d = d(C) = \min\left\lbrace \dist_H(\u,\v): 
\u,\v \in C,\u \neq \v\right\rbrace$. If $C$ is additive, then its 
additive closure property implies that $d(C)$ is given by the 
minimum Hamming weight of the nonzero vectors in $C$.

\begin{definition}\label{def1.1}
Let $\F_4:=\left\lbrace 0,1,\omega,\omega^{2}=\overline{\omega}\right\rbrace $. 
For $x \in \F_{4}$, set $\overline{x}=x^{2}$, the conjugate of $x$. 
Let $n$ be a positive integer and 
$\u = (u_0,u_1,\ldots,u_{n-1}), \v = (v_{0},v_{1},\ldots,v_{n-1}) \in \F_{4}^n$.
\begin{enumerate}
\item $4^{\H}$ is the family of $\F_{4}$-linear codes of 
length $n$ equipped with the \textit{Hermitian inner product} 
\begin{equation}\label{eq:1.1}
 \left\langle \u,\v\right\rangle _{\H} := 
\sum_{i=0}^{n-1} u_i \cdot v_i^{2} \text{.}
\end{equation}

\item $4^{\H+}$ is the family of $\F_{2}$-linear codes 
over $\F_{4}$ of length $n$ equipped with the 
\textit{trace Hermitian inner product} 
\begin{equation}\label{eq:1.2}
 \left\langle \u,\v\right\rangle _{\tr} := 
\sum_{i=0}^{n-1} (u_i\cdot v_i^{2} + u_i^{2} 
\cdot v_i) \text{.}
\end{equation}
\end{enumerate}
\end{definition}

\begin{definition}\label{def:1.1a}
A code $C$ of length $n$ over $\F_{4}$ is said to be an additive 
$\F_{4}$-code if $C$ belongs to the family $4^{\H+}$.
\end{definition}

Let $C$ be a code. Under a chosen inner product $*$, the 
\textit{dual code} $C^{\perp_{*}}$ of $C$ is given by
\begin{equation*}
 C^{\perp_{*}} := \left\lbrace \u \in \F_q^n : \left\langle 
\u,\v\right\rangle _{*} = 0 \text{ for all } \v \in C 
\right\rbrace \text{.}
\end{equation*}

A code is said to be \textit{self-orthogonal} 
if it is contained in its dual and is said to be \textit{self-dual} 
if its dual is itself. We say that a family of codes is 
\textit{closed} if $(C^{\perp_{*}})^{\perp_{*}} = C$ for each $C$ 
in that family. It has been established~\cite[Ch. 3]{NRS06} that both 
families of codes in Definition~\ref{def1.1} are closed.

The weight distribution of a code and that of 
its dual are important in the studies of their properties.

\begin{definition}\label{def1.2}
The \textit{weight enumerator} $W_C(X,Y)$ of an $(n,M=|C|,d)_q$-code 
$C$ is the polynomial
\begin{equation}\label{WE}
W_C(X,Y)=\sum_{i=0}^n A_{i} X^{n-i}Y^{i} \text{,}
\end{equation}
where $A_{i}$ is the number of codewords of weight $i$ in the code $C$.
\end{definition}

The weight enumerator of the Hermitian dual code $C^{\perp_{\H}}$ of an 
$[n,k,d]_4$-code $C$ is connected to the weight enumerator 
of the code $C$ via the \textit{MacWilliams Equation}
\begin{equation}\label{eq:MacW}
 W_{C^{\perp_{\H}}} (X,Y)= \frac{1}{|C|} W_C(X+3Y,X-Y) \text{.}
\end{equation}

From~\cite[Sec. 2.3]{NRS06} we know that the family $4^{\H+}$ has 
the same MacWilliams Equation as does the family $4^{\H}$. 
Thus, 
\begin{equation}\label{eq:MacWadd}
 W_{C^{\perp_{\tr}}} (X,Y)= \frac{1}{|C|} W_C(X+3Y,X-Y) \text{.}
\end{equation}

\section{The Mapping $S$ on Codes over $\F_{4}$}\label{sec:MapS}
Codes belonging to the family $4^{\H+}$ have been studied primarily 
in connection to designs (e.g.~\cite{KP03}) and to stabilizer 
quantum codes (e.g.~\cite[Sec. 9.10]{HP03}). 
It is well known that if $C$ is an additive $(n,2^k)_4$-code, 
then $C^{\perp_{\tr}}$ is an additive $(n,2^{2n-k})_4$-code.

Note that if the code $C$ is $\F_4$-linear with parameters $[n,k,d]_4$, 
then $C^{\perp_{\H}} = C^{\perp_{\tr}}$. This is because 
$C^{\perp_{\H}} \subseteq C^{\perp_{\tr}}$ and $C^{\perp_{\H}}$ is 
of size $4^{n-k} = 2^{2n-2k}$ which is also the size of $C^{\perp_{\tr}}$.

We are now ready to introduce the mapping $S$ in aid of 
later constructions.
\begin{definition}\label{def:4.1}
In $\F_4^{n}$, define the mapping
\begin{equation}\label{eq:4.2}
\begin{aligned}
S :\quad &\F_4^n\rightarrow\F_4^{2n}\\
&(v_{0},v_{1},\ldots,v_{n-1})\mapsto (v_0,\overline{v_0},v_1,
\overline{v_1},\ldots,v_{n-1},\overline{v_{n-1}}) \text{.}
\end{aligned}
\end{equation}
\end{definition}

It is immediately clear from the definition that $S$ is an 
$\mathbb{F}_{2}$-linear map, injective but not surjective.

\begin{example}\label{example:3.2}
The mapping $S$ is not $\F_4$-linear. Consider $n=2$ and 
$\mathbf{u}=(\omega,\overline{\omega})$. We have
\begin{equation*}
\begin{aligned}
S(\mathbf{u}) &= (\omega,\overline{\omega},\overline{\omega},\omega) \text{,}\\
S(w\cdot\mathbf{u}) &= S((\overline{\omega},1))=(\overline{\omega},\omega,1,1)\\
&\neq \omega\cdot S(\mathbf{u})=(\overline{\omega},1,1,\overline{\omega})\text{.}
\end{aligned}
\end{equation*}
\end{example}

\begin{lemma}\label{lemma:3.3}
Let $C$ be an $(n,M,d)_4$-code. 
For all $\u\in C$ we have
\begin{align*}
\wth_{H}(S(\u)) &= 2\wth_{H}(\u) \quad\text{and}\\
d(S(C)) &= 2d(C) \text{.}
\end{align*}
\end{lemma}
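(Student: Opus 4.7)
The plan is to prove the weight equality first and then deduce the distance equality from it, using the $\F_2$-linearity and injectivity of $S$ to reduce the distance claim back to the weight claim.

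For the weight identity $\wt_H(S(\u)) = 2\wt_H(\u)$, the key observation is that conjugation on $\F_4$, namely $x \mapsto x^2$, is a bijection that fixes $0$, so $v_i = 0$ if and only if $\overline{v_i} = 0$. In the image vector $S(\u) = (v_0, \overline{v_0}, v_1, \overline{v_1}, \ldots, v_{n-1}, \overline{v_{n-1}})$, the coordinates at positions $2i$ and $2i+1$ are simultaneously zero or simultaneously nonzero. Hence each nonzero coordinate of $\u$ contributes exactly two nonzero coordinates to $S(\u)$, giving $\wt_H(S(\u)) = 2\wt_H(\u)$.

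For the distance identity $d(S(C)) = 2d(C)$, I cannot directly invoke the weight formula because the code $C$ is merely an $(n,M,d)_4$-code and need not be additive. Instead I will work with Hamming distances. Given $\u,\v \in C$, I will show $\dist_H(S(\u),S(\v)) = 2\dist_H(\u,\v)$ by the same pairing argument: positions $2i$ and $2i+1$ of $S(\u)$ and $S(\v)$ disagree precisely when $u_i \neq v_i$ (for position $2i$) and when $\overline{u_i} \neq \overline{v_i}$ (for position $2i+1$), and these two conditions coincide because conjugation is a bijection on $\F_4$. Since $S$ is injective, $S(\u) \neq S(\v)$ if and only if $\u \neq \v$, so taking the minimum over distinct pairs gives
\begin{equation*}
d(S(C)) = \min_{\substack{\u,\v \in C \\ \u \neq \v}} \dist_H(S(\u),S(\v)) = 2 \min_{\substack{\u,\v \in C \\ \u \neq \v}} \dist_H(\u,\v) = 2d(C).
\end{equation*}

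There is no real obstacle here; the entire argument hinges on the single fact that $x \mapsto x^2$ is a zero-preserving bijection on $\F_4$, which makes the two copies of each coordinate in $S(\u)$ behave identically for the purposes of both weight and distance counting. The only subtlety worth flagging is avoiding the temptation to argue via weights of differences, which would implicitly require additivity of $C$.
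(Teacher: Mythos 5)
Your proof is correct and rests on the same key observation as the paper's (terse) proof, namely that conjugation on $\F_4$ is a zero-preserving bijection, so the two interleaved copies of each coordinate in $S(\u)$ vanish or differ in lockstep. You are somewhat more careful than the paper in spelling out the distance argument for a general, possibly non-additive, $(n,M,d)_4$-code by working directly with $\dist_H$ rather than weights, which is exactly the right level of rigor given the hypothesis.
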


\begin{proof}
For all $u \in \F_4$, $S(u)=(u,\overline{u})$. Now, 
$\overline{u}=0$ if and only if $u=0$.
\end{proof}

The mapping $S$ is therefore a scaled isometry 
for the Hamming metric that preserves the code size. It sends an
additive $(n,M,d)_4$-code $C$ to an additive code $S(C)$ with 
parameters $(2n,M,2d)_{4}$.

\begin{lemma}
If $C$ is an additive $(n,M,d)_4$-cyclic code then $S(C)$ is an 
additive $(2n,M,2d)_4$-2-quasi-cyclic code.
\end{lemma}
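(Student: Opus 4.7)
The plan is to verify the two independent assertions in the statement: first that $S(C)$ is an additive code with parameters $(2n,M,2d)_4$, and second that it is $2$-quasi-cyclic.

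For the parameters, everything has essentially been done in the preceding material. Since $S$ is $\F_2$-linear (immediate from Definition~\ref{def:4.1}) and $C$ is additive, the image $S(C)$ is closed under addition, so it is additive. Injectivity of $S$ gives $|S(C)| = |C| = M$, and Lemma~\ref{lemma:3.3} yields the length $2n$ and minimum distance $2d$. So the parameter claim reduces to invoking facts already in hand.

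The substantive part is the quasi-cyclic property. I would let $\sigma_n$ denote the cyclic shift on $\F_4^n$ and $\tau$ the $2$-quasi-cyclic shift on $\F_4^{2n}$ (shift by two coordinates). The key observation is the intertwining identity
\begin{equation*}
\tau \circ S = S \circ \sigma_n \text{.}
\end{equation*}
Indeed, starting from $\v=(v_0,v_1,\ldots,v_{n-1})$, one gets
\begin{equation*}
S(\sigma_n(\v)) = (v_{n-1},\overline{v_{n-1}},v_0,\overline{v_0},\ldots,v_{n-2},\overline{v_{n-2}})
\end{equation*}
on one hand, and shifting $S(\v)=(v_0,\overline{v_0},\ldots,v_{n-1},\overline{v_{n-1}})$ by two positions gives the same vector on the other hand. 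The crucial point is that the conjugates stay paired with their originals because $S$ interleaves each $v_i$ with $\overline{v_i}$ in two adjacent slots, and a $2$-shift moves such pairs as blocks.

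Using this identity, if $\c \in S(C)$, write $\c = S(\v)$ with $\v \in C$. Then $\tau(\c) = \tau(S(\v)) = S(\sigma_n(\v))$, and $\sigma_n(\v) \in C$ by cyclicity of $C$, so $\tau(\c) \in S(C)$. This proves $S(C)$ is $2$-quasi-cyclic. I do not anticipate any real obstacle; the only thing one has to be a little careful about is the bookkeeping of indices to confirm the intertwining identity, but it is a direct comparison of two interleaved tuples.
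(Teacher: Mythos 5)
Your proof is correct and follows essentially the same route as the paper: both rest on the observation that applying $S$ to a cyclic shift of $\v$ produces exactly the $2$-shift of $S(\v)$, which you phrase as the intertwining identity $\tau \circ S = S \circ \sigma_n$ and the paper simply exhibits by writing out $S(\v)$ and $S(\v')$. Your additional paragraph on the parameters is a harmless elaboration of facts the paper already established in Lemma~\ref{lemma:3.3} and the surrounding remarks.
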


\begin{proof}
Since $C$ is cyclic, 
\begin{equation*}
\v=(v_{0},\ldots,v_{n-2},v_{n-1}) \in C \textit{ if and 
only if } \v'=(v_{n-1},v_{0},\ldots,v_{n-2}) \in C \textit{.}
\end{equation*}
Applying $S$ yields
\begin{equation*}
\begin{aligned}
S(\v)=(v_{0},\overline{v_{0}},\ldots,v_{n-2},\overline{v_{n-2}},v_{n-1},\overline{v_{n-1}}) \in S(C)\text{,}\\
S(\v')=(v_{n-1},\overline{v_{n-1}},v_{0},\overline{v_{0}},\ldots,v_{n-2},\overline{v_{n-2}}) \in S(C)\text{.}
\end{aligned}
\end{equation*}
By definition, $S(C)$ is an additive $2$-quasi-cyclic code.
\end{proof}

\begin{proposition}\label{prop:1}
Given an additive $(n,M,d)_4$-code $C$, $S(C) \subseteq S(C)^{\perp_{\tr}}$.
\end{proposition}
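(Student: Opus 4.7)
The plan is to verify the containment by directly computing $\langle S(\u), S(\v)\rangle_{\tr}$ for arbitrary $\u, \v \in C$ and showing that the result vanishes. Since additivity of $C$ guarantees that $S(C)$ is also additive (as $S$ is $\F_2$-linear), establishing self-orthogonality at the pairwise level is enough.

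First I would write out $S(\u) = (u_0, \overline{u_0}, u_1, \overline{u_1}, \ldots, u_{n-1}, \overline{u_{n-1}})$ and similarly for $S(\v)$, and expand the trace Hermitian inner product defined in~\eqref{eq:1.2}. Splitting the sum of $2n$ terms into $n$ consecutive pairs corresponding to each original coordinate $i$, the contribution of the $i$-th pair is
\begin{equation*}
(u_i v_i^2 + u_i^2 v_i) + \bigl(\overline{u_i}\cdot\overline{v_i}^{\,2} + \overline{u_i}^{\,2}\cdot\overline{v_i}\bigr).
\end{equation*}

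The key arithmetic step is the identity $\overline{x}^{\,2} = x^4 = x$ valid for all $x \in \F_4$, together with $\overline{x} = x^2$. Substituting these into the second bracket rewrites it as $u_i^2 v_i + u_i v_i^2$, so the full $i$-th contribution becomes $2(u_i v_i^2 + u_i^2 v_i)$, which is zero in characteristic $2$. Summing over $i$ shows $\langle S(\u), S(\v)\rangle_{\tr} = 0$ for all $\u, \v \in C$, which yields $S(C) \subseteq S(C)^{\perp_{\tr}}$.

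There is no real obstacle here; the only point requiring care is the translation between conjugation and squaring in $\F_4$, which is what makes the two summands in each pair cancel rather than add up nontrivially. This same cancellation mechanism is precisely what makes $S$ a useful tool for producing self-orthogonal codes suitable for the asymmetric quantum code construction alluded to in the introduction.
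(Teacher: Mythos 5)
Your proof is correct and follows essentially the same route as the paper: both expand $\langle S(\u), S(\v)\rangle_{\tr}$, use $\overline{x}^{\,2}=x$ to identify the contribution of the conjugated coordinates with that of the original ones, and conclude via the factor of $2$ vanishing in characteristic $2$. The only difference is cosmetic — you group the $2n$ terms into $n$ coordinate-pairs before simplifying, while the paper collects the two halves into two separate sums of length $n$.
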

\begin{proof}
Let $\v=(v_0,v_1,\ldots,v_{n-1}), \u=(u_0,u_1,\ldots,u_{n-1}) \in C$. 
Then
\begin{equation*}
\left\langle S(\v),S(\u) \right\rangle _{\tr} =
\sum_{i=0}^{n-1} \left( v_i \overline{u_i} + \overline{v_i} u_i\right) + 
\sum_{i=0}^{n-1} \left(\overline{v_i} u_i + v_i \overline{u_i} \right) =
2 \sum_{i=0}^{n-1}\left( v_i \overline{u_i} + \overline{v_i} u_i\right) = 0\text{.} 
\end{equation*}
\end{proof}

\section{Module $\theta$-Cyclic Codes over $\F_{4}$}\label{sec:Skew}
The motivation for our definition of \textit{module }$\theta$\textit{-cyclic codes} 
comes from~\cite{BGU07} and~\cite{BU09}. Given $\F_{q}$ and an 
automorphism $\theta$ of $\F_{q}$, we can define a ring structure 
on the set
\begin{equation*}
\cR=\F_{q}[X,\theta]=\left\lbrace a_{n}X^{n}+\ldots+a_{1}X+a_{0}|
a_{i}\in \F_{q} \text{ and } n\in \N \right\rbrace \text{.}
\end{equation*}
In $\cR$, the addition operation is the usual polynomial addition and 
the multiplication is defined by the extension to all elements of $\cR$, 
by associativity and distributivity, the basic rule 
$Xa = \theta(a)X$ for all $a \in \F_{q}$.

The ring $\cR$ is a left and right Euclidean ring whose left and right 
ideals are principal. Right division means that for nonzero 
$f(X),g(X) \in \cR$, there exist unique polynomials $Q_{r}(X),R_{r}(X) 
\in \cR$ such that
\begin{equation*}
 f(X)=Q_{r}(X)\cdot g(X) + R_{r}(X)
\end{equation*}
with $deg(R_{r}(X)) < deg (g(X))$ or $R_{r}(X)=0$.
If $R_{r}(X)=0$, then $g(X)$ is a \textit{right divisor} of $f(X)$ in $\cR$.

\begin{definition}\cite[cf. Defs. 1 and 3]{BU09}
Let $\theta$ be an automorphism of $\F_{q}$. Let $f(X) \in \cR$ be of 
degree $n$. If $I=(f(X))$ is a two-sided ideal of $\cR$, then an 
\textit{ideal $\theta$-code} $C$ is a left ideal $\cR g(X)/\cR f(X) 
\subset \cR/\cR f(X)$ where $g(X)$ is a right divisor 
of $f(X)$ in $\cR$. If $f(X)=X^{n}-1$, we call the ideal $\theta$-code 
corresponding to the left ideal $\cR g(X)/\cR (X^{n}-1) \subset \cR/\cR (X^{n}-1)$ 
an \textit{ideal $\theta$-cyclic code}. 

A \textit{module $\theta$-code} $C$ is a left 
$\cR$-submodule $\cR g(X)/\cR f(X) \subset \cR/\cR f(X)$ where 
$g(X)$ is a right divisor of $f(X)$ in $\cR$. Furthermore,
\begin{enumerate}
\item if $f(X)=X^{n}-c$, with $c \in \F_{q}$, we call the module 
$\theta$-code corresponding to the left $\cR$-module 
$\cR g(X)/\cR f(X) \subset \cR/\cR f(X)$ a 
\textit{module $\theta$-constacyclic code};
\item if $f(X)=X^{n}-1$, we call the module 
$\theta$-code corresponding to the left $\cR$-module 
$\cR g(X)/\cR f(X) \subset \cR/\cR f(X)$ a 
\textit{module $\theta$-cyclic code}.
\end{enumerate}
The length of the module $\theta$-code $C$ is $n=deg(f(X))$ and its dimension is 
$k=deg(f(X))-deg(g(X))$. If the minimum distance of $C$ is $d$, the 
code $C$ is said to be of type $[n,k,d]_{q}$.
\end{definition}
If the codewords of $C$ are identified with the list of the 
coefficients of the remainder of a right division by $f(X)$ in $\cR$, 
then the elements of $\cR g(X)/\cR f(X)$ are all of the left multiples 
of $g(X)=g_{r} X^{r}+\ldots+g_{1} X +g_{0}$.

Thus, a generator matrix $G$ of the corresponding module $\theta$-code of 
length $n=deg(f(X))$ is given by
\begin{equation}\label{eq:GenMatrix}
G=\left(
\begin{array}{*{12}{l}}
g_{0}  & g_{1}         & \ldots        & g_{r-1} & g_{r}           & 0             & \ldots & 0 \\
0      & \theta(g_{0}) & \theta(g_{1}) & \ldots  & \theta(g_{r-1}) & \theta(g_{r}) & \ldots & 0 \\
\vdots & \vdots        & \ddots        & \ddots  & \ddots          & \ddots        & \ddots & \vdots \\
0      & 0 & \ldots & 0  & \theta^{n-r-1}(g_{0}) & \ldots & \theta^{n-r-1}(g_{r-1}) & \theta^{n-r-1}(g_{r}) \\
\end{array}
\right)
\end{equation}
depending only on $g(X)$ and $n$.
\begin{theorem}
A module $\theta$-cyclic code $C_{\theta}$ has the following property
\begin{equation}
(v_{0},v_{1},\ldots,v_{n-2},v_{n-1}) \in C_{\theta} \Rightarrow 
(\theta(v_{n-1}),\theta(v_{0}),\theta(v_{1}),\ldots,\theta(v_{n-2})) \in C_{\theta} \textit{.}
\end{equation} 
\end{theorem}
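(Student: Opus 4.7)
The plan is to translate the coordinate statement into a statement about multiplication in the quotient ring $\cR/\cR(X^n-1)$, and then observe that the claimed shift is nothing but left multiplication by $X$. Since $C_\theta$ is, by definition, a left $\cR$-submodule, it is automatically closed under this action.

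Concretely, I would first identify $\v=(v_0,v_1,\ldots,v_{n-1})\in C_\theta$ with its polynomial representative
\begin{equation*}
v(X) = v_0 + v_1 X + \cdots + v_{n-1} X^{n-1} \in \cR/\cR(X^n-1).
\end{equation*}
Next, using the defining commutation rule $Xa=\theta(a)X$ of $\cR$ and extending by distributivity, I would compute
\begin{equation*}
X\cdot v(X) = \theta(v_0)X + \theta(v_1)X^2 + \cdots + \theta(v_{n-2})X^{n-1} + \theta(v_{n-1})X^n.
\end{equation*}
The right-division reduction modulo $X^n-1$ then replaces the trailing $X^n$ by $1$, yielding
\begin{equation*}
\theta(v_{n-1}) + \theta(v_0) X + \theta(v_1) X^2 + \cdots + \theta(v_{n-2}) X^{n-1},
\end{equation*}
whose coefficient tuple is precisely $(\theta(v_{n-1}),\theta(v_0),\theta(v_1),\ldots,\theta(v_{n-2}))$.

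To conclude, since $C_\theta$ is a left $\cR$-submodule of $\cR/\cR(X^n-1)$, it is closed under left multiplication by $X$; hence the reduced polynomial lies in $C_\theta$, which is the desired implication. The only point requiring care is the reduction step: one must justify that replacing $X^n$ by $1$ is legitimate, i.e.\ that $X^n-1$ generates a two-sided ideal (equivalently, that $\theta(1)=1$, so that $X\cdot(X^n-1)=(X^n-1)\cdot X$). This is where the hypothesis that $X^n-1$ defines a module $\theta$-cyclic code interacts with the noncommutativity of $\cR$, and it is the only nontrivial check in the argument; once it is noted, the rest is essentially a translation between the module-theoretic and coordinate-level descriptions of the shift.
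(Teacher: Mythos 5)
Your argument is correct and it is, in substance, the standard left-multiplication-by-$X$ argument that the paper defers to by citing Theorem~1 of \cite{BGU07} (the paper itself gives no explicit computation, just the remark that the ideal-case proof carries over to modules). The translation to polynomials, the application of $Xa=\theta(a)X$, and the right-division reduction are all exactly right, and closure under left multiplication by $X$ is immediate from $C_\theta$ being a left $\cR$-submodule.

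The one place where your reasoning goes slightly astray is the closing caveat. You claim that the reduction $X^n\equiv 1$ requires $\cR(X^n-1)$ to be a \emph{two-sided} ideal, and you suggest this is the nontrivial hypothesis hiding in the statement. It is not. The quotient $\cR/\cR(X^n-1)$ carries a left $\cR$-module structure whenever $\cR(X^n-1)$ is a left ideal, which it is by construction; left multiplication by $X$ is therefore automatically well-defined on cosets, and $X^n-1\in\cR(X^n-1)$ gives $X^n\equiv 1$ with no further hypothesis. The two-sided condition (equivalently, that the order of $\theta$ divides $n$, which over $\F_4$ with the Frobenius forces $n$ even) is what one needs for the quotient to be a \emph{ring} and is exactly the restriction that the passage from ideal $\theta$-codes to module $\theta$-codes is designed to drop. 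So the check you flag is vacuous here, and your parenthetical ``equivalently, that $\theta(1)=1$'' is also not an equivalence: $X(X^n-1)=(X^n-1)X$ always holds, but two-sidedness of $\cR(X^n-1)$ needs $(X^n-1)a\in\cR(X^n-1)$ for all $a\in\F_q$, i.e.\ $\theta^n=\mathrm{id}$. Removing that caveat would actually strengthen your write-up, since it would make clear that the module formulation works for \emph{all} $n$, which is the whole point.
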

\begin{proof}
The proof of this property for an ideal $\theta$-cyclic code $C$ is established 
in~\cite[Theorem 1]{BGU07}. The same proof works when we 
replace ideal by module.
\end{proof}

Since a module $\theta$-cyclic code $C_{\theta}$ has a representation in the skew 
polynomial ring $\cR=\F_{q}[X,\theta]$ (see~\cite{BU09}), when $\theta$ is fixed, 
we call $C_{\theta}$ a \textit{skew-cyclic code}.

We consider, for the rest of the paper, the Frobenius automorphism 
defined in $\F_4$ by $\theta(x)=x^2=\overline{x}$ for $x \in \F_{4}$. 
Let $[2n]$ denote the set $\left\lbrace 1,2,\ldots,2n\right\rbrace $. Let 
$\sigma = \tau \circ T^2$ be a permutation on $[2n]$ where $T$ is the 
cyclic shift module $2n$ and $\tau = (12)(34)\ldots(2n-1,2n)$. 
Since $T^2$ and $\tau$ commute, $\sigma$ can be written as $T^2 \circ \tau$ 
as well. We denote the identity permutation by $(1)$.

Let $\Sigma$ be the permutation on elements of $\F_4^{2n}$ induced by 
$\sigma$. That is, for $\v=(v_1,v_2,\ldots,v_{2n}) \in \F_4^{2n}$,
\begin{equation}\label{eq:induce}
\Sigma(\v)=\left(v_{\sigma(1)},v_{\sigma(2)},\ldots,v_{\sigma(2n)}\right)\text{.}
\end{equation}

\begin{lemma}\label{lemma:4.3}
Given an $(n,M,d)_{4}$-skew-cyclic code $C_{\theta}$, the 
code $S(C_{\theta})$ is invariant under $\Sigma$.
\end{lemma}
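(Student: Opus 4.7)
The plan is to show that for every $\v = (v_0, v_1, \ldots, v_{n-1}) \in C_\theta$, the image $\Sigma(S(\v))$ coincides with $S(\v')$, where $\v' := (\theta(v_{n-1}), \theta(v_0), \ldots, \theta(v_{n-2}))$. The preceding theorem guarantees $\v' \in C_\theta$, hence $S(\v') \in S(C_\theta)$, so $\Sigma(S(C_\theta)) \subseteq S(C_\theta)$; equality then follows because $\Sigma$ is a bijection of the finite set $\F_4^{2n}$.

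To verify the identity $\Sigma(S(\v)) = S(\v')$, I would index coordinates of $\F_4^{2n}$ from $1$ to $2n$, so that position $2k+1$ of $S(\v)$ carries $v_k$ and position $2k+2$ carries $\overline{v_k}$ for $k = 0, 1, \ldots, n - 1$. From $\sigma = \tau \circ T^2$, a short computation on positions yields $\sigma(2m+1) = 2m$ and $\sigma(2m+2) = 2m-1$ for $m \geq 1$, together with the wrap-around values $\sigma(1) = 2n$ and $\sigma(2) = 2n-1$. Using $(\Sigma(S(\v)))_j = (S(\v))_{\sigma(j)}$, the pair at positions $(2m+1, 2m+2)$ of $\Sigma(S(\v))$ is $(\overline{v_{m-1}}, v_{m-1})$ for $m \geq 1$ and $(\overline{v_{n-1}}, v_{n-1})$ for $m = 0$. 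Writing $\v' = (v'_0, v'_1, \ldots, v'_{n-1})$ with $v'_m = \theta(v_{m-1})$ (indices read modulo $n$), the definition of $S$ places $(\theta(v_{m-1}), v_{m-1})$ at exactly these positions of $S(\v')$, matching $\Sigma(S(\v))$ coordinate-wise.

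The main obstacle is purely organizational rather than conceptual: one must keep the interleaving of $v_i$ and $\overline{v_i}$ produced by $S$ aligned with both the pair swap $\tau$ and the double backward shift $T^2$, and confirm that the conjugation $\theta$ appearing in $\v'$ is exactly what $\sigma$ delivers at the odd positions. The wrap-around case $m = 0$ is the single edge case requiring separate attention, and is handled by the direct computation of $\sigma(1)$ and $\sigma(2)$ above.
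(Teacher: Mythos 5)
Your proof is correct and follows essentially the same route as the paper's: both compute $\Sigma(S(\v))$ coordinate-by-coordinate and identify it with $S$ applied to the skew-cyclic shift $\v' = (\theta(v_{n-1}),\theta(v_0),\ldots,\theta(v_{n-2}))$, then invoke the skew-cyclic property to conclude $\v' \in C_\theta$. Your explicit note that $\Sigma$ is a bijection of a finite set (hence the inclusion $\Sigma(S(C_\theta)) \subseteq S(C_\theta)$ upgrades to equality) is a small but welcome addition that the paper leaves implicit.
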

\begin{proof}
Let $\v=(v_1,v_2,\ldots,v_{2n}) \in S(C_{\theta})$. That is, there exists 
$\u=(u_{1},u_{2},\ldots,u_{n})\in C_{\theta}$ such that
\begin{equation*}
\v= (u_1,\overline{u_1},u_2,\overline{u_2},\ldots,u_{n},\overline{u_{n}})= S(\u)\text{.}
\end{equation*}
Since $C_{\theta}$ is a skew-cyclic code, we have 
\begin{equation*}
\overline{\u}:=(\overline{u_n},\overline{u_1},\ldots,\overline{u_{n-1}}) \in C_{\theta} \text{.}
\end{equation*}
Hence,
\begin{align*}
\Sigma(\v) & = (\overline{u_n}, u_n,\overline{u_1},u_1,\ldots,\overline{u_{n-1}},u_{n-1})\\
           & = S((\overline{u_n},\overline{u_1},\ldots,\overline{u_{n-1}}))\text{,}
\end{align*}
implying $\Sigma (S(C_{\theta})) \subset S(C_{\theta})$.
\end{proof}

\begin{lemma}\label{lemma:order}
The order of $\sigma$ is $2n$ if $n$ is odd and $n$ if $n$ is even.
\end{lemma}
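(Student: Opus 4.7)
The plan is to exploit the commutativity of $\tau$ and $T^2$ noted just before the lemma, which lets me write $\sigma^k = \tau^k \circ T^{2k}$ for every nonnegative integer $k$. Because $\tau^2 = (1)$, the factor $\tau^k$ depends only on the parity of $k$, so the analysis naturally splits into two cases.

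For odd $k$, I would argue that $\sigma^k$ is never the identity by a parity-of-index argument. The shift $T^{2k}$ moves every index of $[2n]$ by an even amount modulo $2n$, so it preserves the partition of $[2n]$ into odd positions and even positions. The involution $\tau$, on the other hand, swaps the two elements within each block $\{2j-1, 2j\}$ and therefore interchanges the two parity classes. Hence $\sigma^k = \tau \circ T^{2k}$ sends every odd index to an even index, which rules out $\sigma^k = (1)$. A direct back-up, if needed, is the explicit computation $\sigma(2j-1) = 2j+2$ and $\sigma(2j) = 2j+1$, with indices read modulo $2n$.

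For even $k = 2j$ the $\tau$-factor disappears and $\sigma^k = T^{4j}$. This equals the identity precisely when $2n \mid 4j$, equivalently when $n \mid k$. Thus the order of $\sigma$ is the smallest positive even multiple of $n$: when $n$ is even this is $n$ itself, and when $n$ is odd the smallest even multiple of $n$ is $2n$, giving exactly the dichotomy claimed.

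I do not foresee any serious obstacle. The only mildly delicate point is excluding odd orders, and the parity-of-index observation disposes of that transparently; everything else reduces to a one-line divisibility computation.
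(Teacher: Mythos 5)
Your proof is correct and rests on the same ideas as the paper's: factor $\sigma^k = \tau^k \circ T^{2k}$ using the commutativity of $\tau$ and $T^2$, rule out odd exponents by an index-parity argument, and reduce even exponents to a divisibility condition on pure shifts. The only difference is presentational -- the paper first derives $\sigma^2 = T^4$ by chasing indices and then handles $n$ even and $n$ odd in separate blocks (computing $\sigma^n = \tau$ in the odd case), whereas you package the conclusion once as ``the order is the least positive even multiple of $n$,'' i.e.\ $\mathrm{lcm}(2,n)$, which is a slightly cleaner way to express the same dichotomy.
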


\begin{proof}
For $1 \leq i \leq 2n$, $\sigma$ follows the following rule
\begin{equation}
\sigma : i \mapsto 
\begin{cases}
i + 3 \pmod{2n} \text{ if } i \text{ is odd} \\
i + 1 \pmod{2n} \text{ if } i \text{ is even}
\end{cases} \text{.}
\end{equation}
With computation done modulo $2n$, observe that if $i$ is odd, 
then $\sigma(i)=i+3$ and $\sigma^{2}(i)=\sigma(i+3)=i+4$. If 
$i$ is even, then $\sigma(i)=i+1$ and $\sigma^{2}(i)=\sigma(i+1)=i+4$.
Hence, $\sigma^2 =T^4 $.

Now, let $n=2l$ for some positive integer $l$. We have
\begin{equation*} 
\sigma^{n} =\sigma^{2l} =T^{4l}=T^{2n}=(1)\text{,}
\end{equation*}
the identity permutation. For $1 \leq k <n $, if $k=2i$, then 
\begin{equation*}
\sigma^k =\sigma^{2i} =T^{4i} \neq (1)
\end{equation*}
since $4i=2k<2n$. If $k=2i+1$, 
\begin{equation*}
\sigma^k =\sigma^{2i+1} =T^{4i} \circ \tau \neq(1)
\end{equation*}
since $\sigma^{k}$ sends $1$ to $4i+1 \neq 1$.
Consequently, the order of $\sigma$ is $n$.

In the case where $n=2l+1$, we have 
\begin{equation*}
\sigma^{2n} = (T^{2n})^2=(1)\text{.}
\end{equation*}
To show minimality, we first note that $\sigma^{n}=\tau$ 
since
\begin{equation*}
\sigma^{n}=\sigma^{2l+1}=T^{4l} \circ \sigma = T^{2n-2} \circ (\tau \circ T^{2}) 
= T^{2n-2} \circ (T^{2} \circ \tau)= \tau \text{.}
\end{equation*}
Consider the following two subcases. 
For $1 \leq k < n$, the same argument as in the even case above shows 
that $\sigma^k \neq (1)$. For $n+1 \leq k < 2n$,
\begin{equation*}
\sigma^{k}=\sigma^{n} \circ \sigma^{k-n} = \tau \circ \sigma^{k-n} \neq (1) \text{.}
\end{equation*}
We conclude that the order of $\sigma$ is $2n$.
\end{proof}

For conciseness, we adopt the following expressions following Lemma~\ref{lemma:order}.
\begin{enumerate}
\item For $n$ odd, $\sigma$ is the following cycle of length $2n$
\begin{equation}\label{eq:odd}
\sigma = (1,\sigma(1),\sigma^2(1),\ldots,\sigma^{2n-2}(1),\sigma^{2n-1}(1)) \text{.}
\end{equation}
\item Since $\sigma^k(1)\neq 2$ for all $0 \leq k \leq n-1$, for $n$ even, 
$\sigma$ can be written as the following product of two cycles, each of length $n$
\begin{equation}\label{eq:even}
\sigma = (1, \sigma(1),\sigma^2(1),\ldots,\sigma^{n-1}(1))
( 2,\sigma(2),\sigma^2(2),\ldots,\sigma^{n-1}(2)) \text{.}
\end{equation}
\end{enumerate}
When it is clear from the context, we write $C$ instead of $C_{\theta}$.

\begin{theorem}\label{thm:main_skew}
Let $C$ be an $[n,k,d]_4$-skew-cyclic code.
If $n$ is odd then $S(C)$ is equivalent to an additive $(2n,2^{2k},2d)_4$-cyclic code $C'$.
If $n$ is even then $S(C)$ is equivalent to an additive $(2n,2^{2k},2d)_4$-2-quasi-cyclic code $C'$.
\end{theorem}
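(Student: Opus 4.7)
The plan is to combine Lemmas~\ref{lemma:3.3},~\ref{lemma:4.3}, and~\ref{lemma:order}. The first gives that $S(C)$ has length $2n$, size $|C|=4^{k}=2^{2k}$, and minimum distance $2d$; the second gives that $S(C)$ is invariant under the induced permutation $\Sigma$; and the third pins down the cycle structure of the underlying permutation $\sigma$. Since $S$ is $\F_{2}$-linear and an $[n,k,d]_{4}$-code is in particular $\F_{2}$-linear, $S(C)$ is additive, and any coordinate permutation of $S(C)$ will preserve additivity, length, size, and Hamming distance, so whatever code we eventually produce will automatically have the required parameters.

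The strategy is then to convert $\Sigma$-invariance into invariance under an ordinary cyclic or $2$-quasi-cyclic shift by conjugating $\sigma$ inside $S_{2n}$ into the appropriate standard form. Two permutations in $S_{2n}$ are conjugate if and only if they have the same cycle type. The standard cyclic shift $T$ on $[2n]$ is a single $2n$-cycle, while $T^{2}$, the shift defining $2$-quasi-cyclicity, is a product of two disjoint $n$-cycles (one on the odd and one on the even positions). By Lemma~\ref{lemma:order} together with the expressions~\eqref{eq:odd} and~\eqref{eq:even}, $\sigma$ is itself a single $2n$-cycle when $n$ is odd, and a product of two disjoint $n$-cycles when $n$ is even. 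Hence $\sigma$ is conjugate in $S_{2n}$ to $T$ in the odd case and to $T^{2}$ in the even case.

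Explicitly, for $n$ odd I would define $\pi\in S_{2n}$ by $\pi(\sigma^{i}(1))=i+1$ for $0\le i\le 2n-1$; this is a bijection on $[2n]$ because the $\sigma$-orbit of $1$ exhausts $[2n]$, and a short check gives $\pi\sigma\pi^{-1}=T$. For $n$ even I would set $\pi(\sigma^{i}(1))=2i+1$ and $\pi(\sigma^{i}(2))=2i+2$ for $0\le i\le n-1$; the remark $\sigma^{k}(1)\neq 2$ for $0\le k\le n-1$ noted just before~\eqref{eq:even} forces the two $\sigma$-orbits to partition $[2n]$, so $\pi$ is well defined and bijective, and a short check gives $\pi\sigma\pi^{-1}=T^{2}$. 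Letting $\Pi$ denote the induced permutation on $\F_{4}^{2n}$, the code $C':=\Pi(S(C))$ is invariant under $T$ in the odd case and under $T^{2}$ in the even case, is equivalent to $S(C)$ by construction, and so is an additive cyclic (respectively $2$-quasi-cyclic) code with the parameters $(2n,2^{2k},2d)_{4}$ identified in the first paragraph.

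The one place I expect to slow down is the conjugation check in the even case: I must make sure the labelling of the two $\sigma$-orbits aligns with the odd and even positions of $[2n]$ in the right order so that $\pi\sigma\pi^{-1}$ is exactly $T^{2}$ rather than some other permutation of the same cycle type. This is essentially bookkeeping; I do not anticipate a genuine obstacle, since once the cycle types are matched the existence of a conjugating permutation is automatic and the proof reduces to translating the $\sigma$-invariance already established in Lemma~\ref{lemma:4.3} into the standard language of cyclic and quasi-cyclic shifts.
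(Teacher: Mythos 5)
Your proposal is correct and follows the same essential route as the paper: $S(C)$ has the right parameters by Lemma~\ref{lemma:3.3}, is invariant under $\Sigma$ by Lemma~\ref{lemma:4.3}, and one produces an explicit coordinate permutation conjugating $\sigma$ into $T$ (for $n$ odd) or $T^{2}$ (for $n$ even) so that the permuted code is cyclic or $2$-quasi-cyclic. The paper's proof simply writes down the conjugator directly, as a two-line array, namely $\sigma'(j)=\sigma^{2n-j}(1)$ in the odd case and the analogous $\sigma''$ built from the two orbits in the even case, and then verifies the identity $T\circ\Sigma'=\Sigma'\circ\Sigma$ (resp.\ $T^{2}\circ\Sigma''=\Sigma''\circ\Sigma$) by expanding coordinates; you instead first observe that $\sigma$ and $T$ (resp.\ $T^{2}$) share the same cycle type, which makes the existence of the conjugator automatic, and then supply essentially the same permutation with the cleaner defining rule $\pi(\sigma^{i}(1))=i+1$, etc. The one thing to tighten is the direction of conjugation: with the paper's convention $\Sigma(\v)=(v_{\sigma(1)},\ldots,v_{\sigma(2n)})$, the induced map $\sigma\mapsto\Sigma_{\sigma}$ is an anti-homomorphism, so the permuted code $\Pi(S(C))$ is $T$-invariant precisely when the coordinate permutation $\rho$ used to build $\Pi$ satisfies $\rho T\rho^{-1}=\sigma$, which for your $\pi$ (with $\pi\sigma\pi^{-1}=T$) means you should apply $\Pi_{\pi^{-1}}$ rather than $\Pi_{\pi}$; this is exactly the bookkeeping you flagged and is trivial to fix.
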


\begin{proof}
Recall that the permutation $\sigma$ on $[2n]$ induces a permutation $\Sigma$ on 
the vectors of $\F_{4}^{2n}$. Consider first the case $n$ odd where 
Equation (\ref{eq:odd}) holds. 
Define the  permutation $\sigma'$ by
\begin{equation}\label{eq:3}
 \sigma' =
 \left(
    \begin{array}{ccccc}
      1 & 2 & \ldots & 2n-1 & 2n \\
      \sigma^{2n-1}(1) & \sigma^{2n-2}(1) & \ldots & \sigma(1) & 1 \\
    \end{array}
 \right).
\end{equation}
It is clear that for all $1 \leq j \leq 2n$,
\begin{equation}\label{eq:4}
 \sigma'(j)=\sigma^{2n-j}(1) \text{.}
\end{equation}
The permutation $\sigma'$ induces a permutation $\Sigma'$ acting 
on the elements of $\F_{4}^{2n}$. For $\v=(v_1,v_2,\ldots,v_{2n}) \in \F_{4}^{2n}$,
\begin{equation}\label{eq:induce2}
\Sigma'(\v)=\left(v_{\sigma'(1)},v_{\sigma'(2)},\ldots,v_{\sigma'(2n)}\right)\text{.}
\end{equation}
To show that $\Sigma'(S(C))$ is cyclic we must prove that for all codewords 
$\v \in S(C)$, $T(\Sigma'(\v))\in \Sigma'(S(C))$ where $T$ is the vector cyclic 
shift. Since $\Sigma(S(C))=S(C)$ by Lemma~\ref{lemma:4.3}, we only need to show that
\begin{equation}\label{eq:5}
 T(\Sigma'(\v))=\Sigma'(\Sigma(\v))\text{.}
\end{equation}
Let us start from the right hand side. By definition,
\begin{equation}\label{eq:6}
\Sigma(\v) = \left(v_{\sigma(1)},v_{\sigma(2)},\ldots,v_{\sigma (2n)}\right) 
:= (v'_1,v'_2,\ldots,v'_{2n})=\v'\text{.}
\end{equation}
From Equation (\ref{eq:4}), we know that
\begin{equation*}
\Sigma'(\Sigma(\v) ) = \left(v'_{\sigma'(1)},v'_{\sigma'(2)},\ldots,v'_{\sigma' (2n)}\right)
= \left( v'_{\sigma^{2n-1}(1)},v'_{\sigma^{2n-2}(1)},\ldots,v'_{\sigma^{1}(1)},v'_{\sigma^{0}(1)}\right) \text{.}
\end{equation*}
By Equation (\ref{eq:6}),
\begin{equation}\label{eq:8}
\Sigma'(\Sigma(\v))= \left(v_{1},v_{\sigma^{2n-1}(1)},\ldots,v_{\sigma^{2}(1)},v_{\sigma(1)}\right) \text{.}
\end{equation}
Moving on to the left hand side. Equation (\ref{eq:4}) implies 
\begin{equation*}
\Sigma'(\v) = \left(v_{\sigma'(1)},v_{\sigma'(2)},\ldots,v_{\sigma' (2n)}\right)
= \left(v_{\sigma^{2n-1}(1)},v_{\sigma^{2n-2}(1)},\ldots,v_{\sigma(1)},v_{1}\right) \text{.}
\end{equation*}
Applying the vector cyclic shift $T$ on $\Sigma'(\v)$ completes the proof of this case.

For $n$ even, Equation (\ref{eq:even}) holds. 
Let the permutation $\sigma''$ be given by
\begin{equation}\label{eq:9}
 \sigma'' =
 \left(
    \begin{array}{ccccccc}
      1 & 2 & 3 & 4 & \ldots & 2n-1 & 2n \\
      \sigma^{n-1}(1) & \sigma^{n-1}(2) & \sigma^{n-2}(1) & \sigma^{n-2}(2)& \ldots& \sigma^0(1) & \sigma^0(2) \\
    \end{array}
 \right).
\end{equation}
Let $b$ be an integer such that $1 \leq b \leq n$. For all $1 \leq j \leq 2n$,
\begin{equation}\label{eq:10}
\sigma''(j) =
\begin{cases}
\sigma^{n-b}(1) \text{ if } j=2b-1\\
\sigma^{n-b}(2) \text{ if } j=2b
\end{cases}\text{.}
\end{equation}

Let $\Sigma''$ be the permutation on vectors in $\F_{4}^{2n}$ induced by $\sigma''$. 
Applying $\Sigma''$ and by Equation (\ref{eq:6}), we have
\begin{align*}
\Sigma''(\Sigma(\v)) &= \left(v'_{\sigma''(1)},v'_{\sigma''(2)},v'_{\sigma''(3)},v'_{\sigma''(4)},
\ldots,v'_{\sigma'' (2n-1)},v'_{\sigma'' (2n)}\right)\\
&= \left(v'_{\sigma^{n-1}(1)},v'_{\sigma^{n-1}(2)},v'_{\sigma^{n-2}(1)},v'_{\sigma^{n-2}(2)},
\ldots,v'_{\sigma(1)},v'_{\sigma(2)} ,v'_{1},v'_{2}\right)
\end{align*}
by Equation (\ref{eq:10}). Now, Equation(\ref{eq:6}) allows us to write 
\begin{equation*}
\Sigma''(\Sigma(\v) )= \left(v_{1},v_{2},v_{\sigma^{n-1}(1)},v_{\sigma^{n-1}(2)},\ldots,v_{\sigma^{2}(1)},v_{\sigma^{2}(2)}
 ,v_{\sigma(1)},v_{\sigma(2)}\right) \text{.}
\end{equation*}
Since 
\begin{align*}
\Sigma''(\v) & =  \left(v_{\sigma''(1)},v_{\sigma''(2)},v_{\sigma''(3)},v_{\sigma''(4)},
\ldots,v_{\sigma'' (2n-3)},v_{\sigma'' (2n-2)},v_{\sigma'' (2n-1)},v_{\sigma'' (2n)}\right)\\
 &= \left(v_{\sigma^{n-1}(1)},v_{\sigma^{n-1}(2)},v_{\sigma^{n-2}(1)},v_{\sigma^{n-2}(2)},\ldots,v_{\sigma(1)},v_{\sigma(2)}
 ,v_{1},v_{2}\right)
\end{align*}
and $\Sigma(S(C))=S(C)$, we get
\begin{equation*}
T^2 (\Sigma''(\v)) = \Sigma''(\Sigma(\v)) \in \Sigma''(S(C))\text{.}
\end{equation*}
Thus, $\Sigma''(S(C))$ is a 2-quasi-cyclic code. This completes the entire proof.
\end{proof}

\begin{example}
For $n=4$, we have
\begin{align*}
\sigma &=(1,4,5,8)(2,3,6,7) \text{ and}\\
\sigma'' &=(1,8,2,7)(3,5,4,6) \text{.}
\end{align*}
Following~\cite[Example 2]{BGU07}, let $C$ be a $[4,2,3]_4$-skew-cyclic 
code with generator matrix
\begin{equation}\label{eq:n=4}
G=\left(
\begin{array}{*{12}{l}}
1  & 0  & \ovom  & \om      \\
0  & 1  & \om  & \ovom
\end{array}
\right)\text{.}
\end{equation}
Verifying that $S(C)$ is invariant under $\Sigma$ is immediate.

Choose $\u=(1,0,\ovom,\om) \in C$. Let $\v=S(\u)=(1,1,0,0,\ovom,\om,\om,\ovom)$. Then
\begin{align*}
\Sigma'' \left(\Sigma(\v)\right) &= \left(v_{\sigma^4(1)},v_{\sigma^4(2)},
v_{\sigma^3(1)},v_{\sigma^3(2)},v_{\sigma^2(1)},v_{\sigma^2(2)},v_{\sigma(1)},v_{\sigma(2)} \right)\\
                                 &= \left(v_{1},v_{2},v_{8},v_{7},v_{5},v_{6},v_{4},v_{3} \right)
= (1,1,\ovom,\om,\ovom,\om,0,0) \text{, while} \\
\Sigma''(\v) &= \left(v_{\sigma^3(1)},v_{\sigma^3(2)},
v_{\sigma^2(1)},v_{\sigma^2(2)},v_{\sigma(1)},v_{\sigma(2)},v_1,v_2 \right)\\
                                 &= \left(v_{8},v_{7},v_{5},v_{6},v_{4},v_{3},v_{1},v_{2} \right)
= (\ovom,\om,\ovom,\om,0,0,1,1) \text{.}
\end{align*}
\end{example}

Explicit computation up to length $n=21$ shows that the only examples of 
module $\theta$-cyclic codes of odd lengths are the usual cyclic codes.

\begin{example}
For $n=7$, we have
\begin{align*}
\sigma &=(1,4,5,8,9,12,13,2,3,6,7,10,11,14) \text{ and}\\
\sigma' &=(1,14)(2,11,8,13,4,7)(3,10,9,12,5,6) \text{.}
\end{align*}
Let $C$ be a $[7,4,3]_4$-skew-cyclic 
code with generator matrix
\begin{equation}\label{eq:n=7}
G=\left(
\begin{array}{*{12}{l}}
1    & 1   & 0  & 1  & 0 & 0 & 0 \\
0    & 1   & 1  & 0  & 1 & 0 & 0 \\
0    & 0   & 1  & 1  & 0 & 1 & 0 \\
0    & 0   & 0  & 1  & 1 & 0 & 1 
\end{array}
\right)\text{.}
\end{equation}

Let $\v=(1,1,1,1,0,0,1,1,0,0,0,0,0,0)$. Then
\begin{align*}
\Sigma' \left(\Sigma(\v)\right) &= \left(v_{1},v_{14},v_{11},v_{10},v_{7},v_{6},v_{3},v_{2},v_{13},v_{12},v_{9},v_{8},v_{5},v_{4} \right) \\
                                &= (1,0,0,0,1,0,1,1,0,0,0,1,0,1)\text{, while} \\
\Sigma'(\v) &= \left(v_{\sigma^{13}(1)},v_{\sigma^{12}(1)},v_{\sigma^{11}(1)},v_{\sigma^{10}(1)},\ldots, v_{\sigma(1)},v_{1}\right)\\
                                 &= \left(v_{14},v_{11},v_{10},v_{7},v_{6},v_{3},v_{2},v_{13},v_{12},v_{9},v_{8},v_{5},
v_{4},v_{1}\right)\\
                                 &= (0,0,0,1,0,1,1,0,0,0,1,0,1,1) \text{.}
\end{align*}
\end{example}

Theorem~\ref{thm:main_skew}, our main result in this section, reveals 
the structural connection between skew-cyclic codes under the mapping 
$S$ and additive cyclic or additive 2-quasi-cyclic codes, depending on 
the parity of the length. Combined with the orthogonality property that 
the mapping $S$ induces, we can further make a connection to 
asymmetric quantum codes.

\section{Asymmetric Quantum Codes}\label{sec:AQC}
For brevity, it is assumed that the reader is familiar 
with the standard error model in quantum error-correction, 
both symmetric and asymmetric. For references on the motivation 
and previous constructions of asymmetric quantum codes,~\cite{SRK09} 
and~\cite{WFLX09} can be consulted.

\begin{definition}
Let $d_{x}$ and $d_{z}$ be positive integers. A
quantum code $Q$ in $V_{n}=\mathbb{C}^{q^{n}}$ 
with dimension $K\geq2$ is called an 
\textit{asymmetric quantum code} with parameters $((n,K,d_{z}/d_{x}))_{q}$
or $[[n,k,d_{z}/d_{x}]]_{q}$, where $k=\log_{q}K$, if $Q$ detects $d_{x}-1$
quantum digits of $X$-errors and, at the same time, $d_{z}-1$
quantum digits of $Z$-errors.
\end{definition}

The following result has been shown recently in~\cite{ELS10}.
\begin{theorem}\cite[Th. 4.5]{ELS10}\label{thm:main}
Let $q=r^2$ be an even power of a prime $p$. For $i = 1,2$, 
let $C_i$ be a classical additive code with parameters 
$(n,K_i,d_i)_q$. If $C_1^{\perp_{\tr}} \subseteq C_2$, then 
there exists an asymmetric quantum code $Q$ with parameters 
$((n,\frac{|C_2|}{|C_1^{\perp_{\tr}}|},d_z/d_x))_q$ where 
$\left\lbrace d_z,d_x\right\rbrace = \left\lbrace d_1,d_2\right\rbrace$.
\end{theorem}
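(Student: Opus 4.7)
The plan is to adapt the CSS stabilizer construction to the asymmetric error model. I would first identify $\F_{r^2}^n$ with $\F_r^{2n}$ via a trace-orthogonal basis of $\F_{r^2}/\F_r$, so that the trace-Hermitian form pulls back to the standard symplectic form on $\F_r^{2n}$ and $C^{\perp_{\tr}}$ corresponds to the symplectic dual. Under this dictionary, the hypothesis $C_1^{\perp_{\tr}}\subseteq C_2$ becomes a symplectic-nesting condition between two $\F_r$-subspaces of $\F_r^{2n}$, which is the structural input needed for a CSS-style construction.

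Next I would assemble the stabilizer. To each vector $v\in\F_r^{2n}$ one attaches a generalized Pauli operator $E(v)\in\mathcal{P}_n$ acting on $(\C^{q})^{\otimes n}$ whose commutation is governed by the symplectic form. Declaring the X-type part of the stabilizer group $\mathcal{S}$ to come from the X-projection of $C_1^{\perp_{\tr}}$ and the Z-type part from the complementary Z-projection of $C_2$, the nesting hypothesis is exactly the commutativity requirement that makes $\mathcal{S}$ abelian. A standard orbit-counting argument then shows that the joint $+1$-eigenspace $Q$ has complex dimension $K = |C_2|/|C_1^{\perp_{\tr}}|$, yielding the size parameter in the statement.

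The final step is to read off the two distances from the Knill-Laflamme criterion. Since X-type errors are detected by Z-type stabilizer elements and vice versa, an undetected pure X-error must be represented by a vector lying in $C_1\setminus C_1^{\perp_{\tr}}$, so the minimum weight of such a nontrivial X-error is exactly $d_1 = d(C_1)$; a dual argument using $C_2$ produces the Z-distance $d_2$. The appearance of the unordered pair $\{d_z,d_x\}=\{d_1,d_2\}$ in the conclusion reflects the freedom to swap which of $C_1,C_2$ plays the X- vs Z-role in the construction. The main obstacle will be cleanly separating the X-weight and Z-weight contributions in the Knill-Laflamme condition; in the ordinary symmetric stabilizer construction the two distances collapse to $\min(d_1,d_2)$, and preserving them as independent quantities is precisely the new ingredient that makes the resulting code genuinely asymmetric.
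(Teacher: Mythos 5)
The paper does not prove this theorem: it is quoted verbatim from \cite[Th.~4.5]{ELS10} and used as a black box, so there is no internal proof to compare against. Judged on its own terms, your opening move (the trace dictionary $\F_{r^2}^n\cong\F_r^{2n}$ under which $\perp_{\tr}$ becomes symplectic orthogonality) and your closing move (reading $d_z,d_x$ off a Knill--Laflamme check, with the unordered pair $\{d_1,d_2\}$ reflecting the interchangeability of $C_1$ and $C_2$) are the right framework.

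The gap is the middle step, where you ``declare the X-type part of $\mathcal{S}$ to come from the X-projection of $C_1^{\perp_{\tr}}$ and the Z-type part from the complementary Z-projection of $C_2$'' and then claim that the hypothesis $C_1^{\perp_{\tr}}\subseteq C_2$ ``is exactly the commutativity requirement.'' It is not. Under the dictionary, $C_1^{\perp_{\tr}}$ and $C_2$ are arbitrary $\F_r$-subspaces of $\F_r^{2n}$; they carry no canonical splitting into an X-block and a Z-block, and a containment between two such subspaces is not an orthogonality statement between X- and Z-projections. For $X^a$ (with $a$ drawn from the X-projection of $C_1^{\perp_{\tr}}$) to commute with $Z^b$ (with $b$ drawn from the Z-projection of $C_2$) you would need the Euclidean pairing $a\cdot b$ to vanish for all such $a,b$, and the nesting gives no such thing in general. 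What the hypothesis actually supplies --- via closedness of the family $4^{\H+}$ --- is the mirror nesting $C_2^{\perp_{\tr}}\subseteq C_1$, and the construction in \cite{ELS10} (following \cite{SRK09}) works at the level of cosets: the quantum code is spanned by superpositions indexed by $C_2/C_1^{\perp_{\tr}}$, the size $|C_2|/|C_1^{\perp_{\tr}}|$ is counted from there, and the Knill--Laflamme verification separates into a shift-error part bounded by $\mathrm{wt}\bigl(C_2\setminus C_1^{\perp_{\tr}}\bigr)\geq d_2$ and a phase-error part bounded by $\mathrm{wt}\bigl(C_1\setminus C_2^{\perp_{\tr}}\bigr)\geq d_1$. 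You correctly flag ``cleanly separating the X- and Z-weight contributions'' as the main obstacle at the end, but the X/Z-projection shortcut you propose does not surmount that obstacle; it assumes it away. As written, the stabilizer you describe is not abelian, is not the right size, and hence the ``standard orbit-counting argument'' for $K=|C_2|/|C_1^{\perp_{\tr}}|$ has nothing to count.
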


As explained in~\cite{BGU07} and in~\cite{BU09}, there are two major gains 
of using module $\theta$-codes. First, there is more flexibility and 
generality in constructing (linear) codes without increasing the complexity of 
the encoding and decoding process. The notion of \textit{$q$-cyclic codes}, 
introduced in~\cite{Ga85}, for instance, covers ideal $\theta$-cyclic 
codes with $\theta$ limited to the Frobenius automorphism only.

More important to the agenda of constructing asymmetric quantum codes 
is the second gain, which is the minimum distance improvement. Exhaustive 
search on module $\theta$-codes up to certain length has yielded 
linear codes with better parameters. More systematically, the BCH approach of 
constructing codes with a prescribed lower bound on the minimum distance can be 
extended to module $\theta$-codes as well. Section 3 of~\cite{BU09} contains 
the construction details. The resulting improvements have been added to the 
database of \textit{best-known linear codes} (BKLC) of MAGMA~\cite{BCP97}.

For the remaining of the paper, we will concentrate on constructing 
asymmetric quantum codes with $d_{z} \geq d_{x}=2$ based on Theorem~\ref{thm:main}. 
We will see how the mapping $S$ can be used as an aid in construction. 
All computations are done in MAGMA V2.16-5.

\section{Analysis on the Weight Enumerators}\label{sec:AnalysisWE}
In this section, the weight enumerators of $S(C)$ and of $S(C)^{\perp_{\tr}}$ 
are analyzed. This analysis will be useful in determining $d_{x}$.

Let $A_{i}$ be the number of codewords of weight $i$ in an 
additive $(n,M,d)_4$-code $C$. Then the weight enumerators of $S(C)$ 
and $S(C)^{\perp_{\tr}}$ can be written in terms of the weight 
enumerator of $C$ with the help of Equation (\ref{eq:MacWadd}) 

\begin{align}
 W_{S(C)}(X,Y) &=\sum_{i=0}^n A_{i} X^{2(n-i)}Y^{2i} \text{,} \label{eq:6.1} \\
 W_{S(C)^{\perp_{\tr}}}(X,Y) &= \frac{1}{|S(C)|} W_{S(C)}(X+3Y,X-Y) \text{.} \label{eq:6.2}
\end{align}

More explicitly,
\begin{equation}\label{eq:6.3}
W_{S(C)^{\perp_{\tr}}}(X,Y) = \frac{1}{M}\sum_{i=0}^{n} A_i L_{i} \text{,}
\end{equation}
where $L_{i}$ is given by
\begin{equation}\label{eq:6.4}
\left(\sum_{j=0}^{n-i}\binom{n-i}{j} X^{n-i-j}(3Y)^{j}\right)^{2}
\left(\sum_{l=0}^{i}\binom{i}{l} X^{i-l}(-Y)^{l}\right)^{2} \text{.}
\end{equation}

Denote the number of codewords of weight $i$ in the code 
$C^{\perp_{\tr}}$ by $A_i^{\perp_{\tr}}$. By using the 
\textit{Pless power moments} with $q=4$ (see~\cite[p. 259]{HP03} 
for the linear version), we have
\begin{equation}\label{eq:6.5}
\sum_{i=0}^n A_i =|C|=M \text{,}
\end{equation}

\begin{equation}\label{eq:6.6}
\sum_{i=0}^n iA_i =\frac{M}{4}(3n-A_1^{\perp_{\tr}}) \text{,}
\end{equation}

\begin{equation}\label{eq:6.7}
\sum_{i=0}^n i^{2}A_i =\frac{M}{4^{2}}\left\lbrace (9n^2+3n)-(6n-2)A_1^{\perp_{\tr}}
+2A_2^{\perp_{\tr}}\right\rbrace \text{.}
\end{equation}

If we further assume that $A_1^{\perp_{\tr}}=A_2^{\perp_{\tr}}=0$, 
then the following statements hold for Equation (\ref{eq:6.2}).
\begin{enumerate}
 \item The coefficient of $Y^{0}X^{2n}$ is $\frac{1}{M}\sum_{i=0}^{n}A_i = 1$.
 \item The coefficient of $YX^{2n-1}$ is
\begin{align*}
&\frac{1}{M} \sum_{i=0}^{n} A_i \left( 2 \cdot (n-i) \cdot 3 - 2i \right) \\
&= \frac{1}{M} \sum_{i=0}^{n} A_i(6n-8i) = 6n - 4^{-1}\cdot8(3n) = 0
\end{align*}
by Equation (\ref{eq:6.6}).
\item The coefficient of $Y^{2}X^{2n-2}$ is
\begin{align*}
&\frac{1}{M} \sum_{i=0}^{n} A_i \left(18n^2-48ni+32i^2-9n+8i\right) \\
&= \frac{18n^2-9n}{M}\sum_{i=0}^{n} A_i + \frac{8-48n}{M}
\sum_{i=0}^{n} iA_i + \frac{32}{M} \sum_{i=0}^{n}i^2A_i \\
&= 3n
\end{align*}
by Equations (\ref{eq:6.6}) and (\ref{eq:6.7}).
\end{enumerate}

If we rewrite
\begin{equation}\label{eq:6.8}
W_{S(C)^{\perp_{\tr}}}(X,Y) = \sum_{i=0}^{2n} B_i X^{2n-i}Y^i \text{,}
\end{equation}
then $B_0=1, B_1=0,$ and $B_2=3n$. This is true for any additive 
$(n,M,d)_4$-code $C$ with $d(C^{\perp_{\tr}})\geq 3$. 
If $d(C^{\perp_{\tr}})=1$, then $B_1=2A_1^{\perp_{\tr}}>0$. 
If $d(C^{\perp_{\tr}})=2$, then $B_1=0$ and $B_2 = 3n+4A_2^{\perp_{\tr}}>0$.

As a direct consequence of Proposition~\ref{prop:1} and the above 
analysis on the weight enumerators, we derive the following result.
\begin{proposition}\label{prop:2}
Given any additive $(n,M,d)_4$-code $C$ such that $d(C^{\perp_{\tr}}) \geq 2$, 
there exists an asymmetric quantum code $Q$ with parameters 
$[[2n,\log_{4}\left( \frac{|S(C)^{\perp_{\tr}}|}{|S(C)|}\right),2/2]]_4$.
\end{proposition}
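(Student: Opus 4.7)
The plan is to apply Theorem~\ref{thm:main} with $C_1 = C_2 = S(C)^{\perp_{\tr}}$, viewing both as additive codes of length $2n$ over $\F_{4} = \F_{2^{2}}$. Proposition~\ref{prop:1} gives $S(C) \subseteq S(C)^{\perp_{\tr}}$, and since the family $4^{\H+}$ is closed under double duals, the required hypothesis $C_1^{\perp_{\tr}} \subseteq C_2$ reduces to $S(C) \subseteq S(C)^{\perp_{\tr}}$, which is exactly the content of Proposition~\ref{prop:1}. The dimension produced by the theorem is then
\[
\log_{4} \frac{|C_2|}{|C_1^{\perp_{\tr}}|} = \log_{4} \frac{|S(C)^{\perp_{\tr}}|}{|S(C)|} \text{,}
\]
which matches the dimension in the statement.

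What remains is to verify that $d_z = d_x = 2$. Since $C_1 = C_2$, Theorem~\ref{thm:main} assigns $d_z$ and $d_x$ the same value, namely $d(S(C)^{\perp_{\tr}})$, so I must show this common value is exactly $2$. Writing the weight enumerator of $S(C)^{\perp_{\tr}}$ in the form $\sum_{i=0}^{2n} B_i X^{2n-i} Y^i$ as in Equation~(\ref{eq:6.8}), the hypothesis $d(C^{\perp_{\tr}}) \geq 2$ forces $A_{1}^{\perp_{\tr}} = 0$. The coefficient extraction from Equation~(\ref{eq:6.2}) combined with the Pless power moments (\ref{eq:6.5})--(\ref{eq:6.7}), already carried out in the enumerated items preceding the proposition, then delivers $B_{0} = 1$ and $B_{1} = 0$, ruling out dual codewords of weight one. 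The same analysis gives $B_{2} = 3n$ when $d(C^{\perp_{\tr}}) \geq 3$ and $B_{2} = 3n + 4 A_{2}^{\perp_{\tr}}$ when $d(C^{\perp_{\tr}}) = 2$, both strictly positive, so a dual codeword of weight exactly $2$ exists. Thus $d(S(C)^{\perp_{\tr}}) = 2$.

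Plugging $q = 4 = 2^{2}$ and $d_1 = d_2 = 2$ into Theorem~\ref{thm:main} then produces an asymmetric quantum code with the claimed parameters $[[2n,\, \log_{4}(|S(C)^{\perp_{\tr}}|/|S(C)|),\, 2/2]]_{4}$. The proof is essentially the packaging of three earlier ingredients: the self-orthogonality $S(C) \subseteq S(C)^{\perp_{\tr}}$, the weight-enumerator constraints forcing the minimum dual distance to be exactly $2$, and Theorem~\ref{thm:main}. The only step with any real content is the second, and since that work has already been done in the previous section, the remaining obstacle here is essentially just careful book-keeping to confirm the hypotheses of Theorem~\ref{thm:main} are met with the right $(C_1, C_2)$.
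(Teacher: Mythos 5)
Your proposal is correct and follows exactly the paper's own route: invoke Proposition~\ref{prop:1} to verify the nestedness hypothesis with $C_1=C_2=S(C)^{\perp_{\tr}}$, read off the dimension from Theorem~\ref{thm:main}, and use the $B_0,B_1,B_2$ computation from the weight-enumerator analysis to pin down $d_z=d_x=2$. You merely spell out a few steps the paper leaves implicit (closure of the family $4^{\H+}$ under double duals and the case split on $d(C^{\perp_{\tr}})$), so no discrepancy to report.
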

\begin{proof}
By Proposition~\ref{prop:1}, $S(C) \subseteq S(C)^{\perp_{\tr}}$. 
Apply Theorem~\ref{thm:main} by taking $C_1 = C_2 = S(C)^{\perp_{\tr}}$. 
The values $d_z = d_x = 2$ follow from the analysis on the weight enumerators.
\end{proof}

The parameters of the resulting code $Q$ based on the construction in 
Proposition~\ref{prop:2} are not so good. Fortunately, the mapping $S$ 
preserves nestedness. This fact can be used to derive asymmetric quantum codes 
with better parameters.

\begin{theorem}\label{theorem:6.1}
Let $C$ be an additive $(n,M_1,d_1)_4$-code such that 
$d(C^{\perp_{\tr}})\geq 2$. Let $D$ be an additive 
$(n,M_2,d_2)_4$-code satisfying $C \subseteq D$. 
Then there exists an asymmetric quantum code $Q$ with 
parameters $[[2n,\log_{4}\left( \frac{M_{2}}{M_{1}}\right),2d_2/2]]_4$.
\end{theorem}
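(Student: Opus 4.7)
The plan is a direct application of Theorem \ref{thm:main} with $C_1 := S(C)^{\perp_{\tr}}$ and $C_2 := S(D)$, both viewed as additive codes of length $2n$ over $\F_4$. To verify the nestedness hypothesis $C_1^{\perp_{\tr}} \subseteq C_2$, I would first use that the family $4^{\H+}$ is closed to rewrite $C_1^{\perp_{\tr}} = S(C)$, and then note that $S$, being $\F_2$-linear and injective by Definition \ref{def:4.1}, sends the inclusion $C \subseteq D$ to $S(C) \subseteq S(D) = C_2$. This is exactly the ``nestedness preservation'' remark made just before the statement of the theorem.

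For the size count, injectivity of $S$ gives $|S(D)| = M_2$ and $|S(C)| = M_1$, so $|C_2|/|C_1^{\perp_{\tr}}| = M_2/M_1$, and the resulting quantum code has dimension $\log_4(M_2/M_1)$. On the $Z$-side, Lemma \ref{lemma:3.3} supplies $d(C_2) = d(S(D)) = 2d_2$ immediately.

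The one step that actually requires work is identifying $d(C_1) = d(S(C)^{\perp_{\tr}}) = 2$ on the $X$-side. For this I would invoke the Pless-based coefficient calculation carried out at the start of Section \ref{sec:AnalysisWE}. Writing $W_{S(C)^{\perp_{\tr}}}(X,Y) = \sum_i B_i X^{2n-i} Y^i$, the hypothesis $d(C^{\perp_{\tr}}) \geq 2$ forces $A_1^{\perp_{\tr}} = 0$, which makes the displayed computation of the coefficient of $YX^{2n-1}$ collapse to $0$, so $B_1 = 0$. Meanwhile $B_2$ equals $3n$ in the case $d(C^{\perp_{\tr}}) \geq 3$ and equals $3n + 4A_2^{\perp_{\tr}} > 0$ in the case $d(C^{\perp_{\tr}}) = 2$, so in either subcase $B_2 > 0$. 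This pins down $d(C_1) = 2$ exactly.

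The main obstacle, and really the only subtle point, is this $X$-distance computation: showing $d(S(C)^{\perp_{\tr}}) = 2$ (not merely $\geq 2$) is what justifies the hypothesis being stated as $d(C^{\perp_{\tr}}) \geq 2$ rather than $\geq 3$, and it hinges on the case split for $B_2$ above. Once this is in hand, Theorem \ref{thm:main} yields an asymmetric quantum code with $\{d_z, d_x\} = \{2d_2, 2\}$ and the desired parameters $[[2n, \log_4(M_2/M_1), 2d_2/2]]_4$.
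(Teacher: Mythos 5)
Your proposal is correct and takes essentially the same route as the paper: apply Theorem \ref{thm:main} with $C_1 = S(C)^{\perp_{\tr}}$, $C_2 = S(D)$, and read off $d_x = 2$ from the weight-enumerator analysis ($B_1 = 0$, $B_2 > 0$ under $d(C^{\perp_{\tr}}) \geq 2$). You merely spell out the nestedness check $C_1^{\perp_{\tr}} = S(C) \subseteq S(D)$ and the size count more explicitly than the paper, which leaves them implicit.
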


\begin{proof}
Apply Theorem~\ref{thm:main} by taking $C_1 = S(C)^{\perp_{\tr}}$ 
and $C_2 = S(D)$. The code $S(C)$ is an additive $(2n,M_{1},2d_1)_4$-code. 
Similarly, $S(D)$ is an additive code of parameters 
$(2n,M_{2},2d_2)_4$. The values for $d_z$ and $d_x$ follow 
from the discussion on the weight enumerators above.
\end{proof}

\begin{example}
Let $C=D$ be the $[n,1,n]_4$-repetition code generated 
by the all one vector $\mathbf{1}=(1,\ldots,1)$. 
It can be directly verified that $d(C^{\perp_{\tr}})= 2$. 
Hence, we get an asymmetric quantum code $Q$ with parameters 
$\mathbf{[[2n,0,2n/2]]_4}$ by Theorem~\ref{theorem:6.1}. This code 
$Q$ satisfies the equality of the quantum version of the 
Singleton bound $k \leq n-d_{x}-d_{z}+2$.
\end{example}

Henceforth, any asymmetric quantum code $Q$ satisfying 
$k = n-d_{x}-d_{z}+2$ is printed in boldface. 
We call such a code an \textit{asymmetric quantum MDS code}.

\begin{example}\label{example:5.2}
Consider the $[4,2,3]_4$-module $\theta$-cyclic code $D$ with generator 
matrix $G$ in Equation(\ref{eq:n=4}) above. The code $D$ contains the 
$[4,1,4]_4$-repetition code $C$ generated by $\mathbf{1}$. Applying 
Theorem~\ref{thm:main} with $C=C_{1}^{\perp_{\tr}}$ and $D=C_{2}$ 
results in a $\mathbf{[[4,1,3/2]]_{4}}$-asymmetric quantum code. 
Under the mapping $S$, by Theorem~\ref{theorem:6.1}, we arrive at 
an $[[8,1,6/2]]_{4}$-asymmetric quantum code.
\end{example}

The investigation on self-dual module $\theta$-code yields 
new Hermitian self-dual linear $\F_{4}$-codes with 
parameters $[50,25,14]_{4}$ and $[58,29,16]_{4}$. These codes 
are listed down in~\cite[Table 3]{BU09}. They can be used to derive 
asymmetric quantum codes $Q$ with parameters $[[50,0,14/14]]_{4}$ 
and $[[58,0,16/16]]_{4}$ following~\cite[Th. 7.1]{ELS10}. The latter 
code improves on the $[[58,0,14/14]]_{4}$-code in~\cite[Table III]{ELS10}.

The next section presents two systematic constructions of 
asymmetric quantum codes with $d_{z} \geq d_{x}=2$ by using the database of BKLC and by  
applying the mapping $S$ on concatenated Reed-Solomon codes, respectively.

\section{Two Constructions}\label{sec:Construct}
Under the mapping $S$, Theorem~\ref{theorem:6.1} says that 
while we cannot improve on $d_x=2$, we can relax the condition 
on the inner code $C$ to possibly improve on the size of $Q$ 
as well as on $d_{z}$. Our aim, then, is to choose the 
smallest possible subcode $C$ of $D$ such that 
$d(C^{\perp_{\tr}})\geq 2$ while keeping the size and 
the minimum distance of $D$ relatively large.

Note that there is no additive $(n,2,d)_{4}$-code with 
$d(C^{\perp_{\tr}})\geq 2$. The smallest additive code with 
$d(C^{\perp_{\tr}})= 2$ is an $(n,4,n)_{4} = [n,1,n]_{4}$-code $C$ 
consisting of the scalar multiples of a codeword $\v$ of weight 
$n$. Since this code $C$ is MDS, its dual 
$C^{\perp_{\tr}}=C^{\perp_{\H}}$ is of parameters $[n,n-1,2]_{4}$.

\subsection{Construction from best-known linear codes (BKLC)}\label{subsec:BKLC}
Let $n,k$ be fixed with $2 \leq k \leq n-1$. 
The strategy here is to consider the best-known 
linear code $D$ of length $n$ and dimension $k$ 
stored in the MAGMA database and check if the code contains 
codewords of weight $n$ and put them in a set $R$. If $R$ is non-empty, 
we choose an arbitrary codeword $\v \in R$ and construct a subcode 
$C \subset D$ of parameters $[n,1,n]_{4}$ whose elements are the 
four scalar multiples of $\v$.

Based on the codes $C$ and $D$, two asymmetric quantum codes can be derived, 
one from Theorem~\ref{thm:main} directly without the mapping $S$ by 
letting $C_{1}^{\perp_{\tr}}=C$ and $C_{2}=D$ and another from 
Theorem~\ref{theorem:6.1} under the mapping $S$. 
We label the first quantum code $Q$ while the second one $Q_{S}$.

\begin{theorem}\label{thm:MDSAsym}
Given any positive integer $n \geq 3$, there exists an 
$\mathbf{[[n,n-2,2/2]]_{4}}$-asymmetric quantum MDS code.
\end{theorem}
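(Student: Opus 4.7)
The plan is to specialize the construction of Subsection~\ref{subsec:BKLC} by choosing $D$ to be the parity-check code of length $n$, namely $D=\{\v\in\F_{4}^{n}:\sum_{i=0}^{n-1}v_{i}=0\}$, which is an $[n,n-1,2]_{4}$-code and is a best-known linear code for this set of parameters. I will not need to invoke the mapping $S$ or Theorem~\ref{theorem:6.1}; the asymmetric quantum code $Q$ obtained directly from Theorem~\ref{thm:main} already has the right parameters.

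First, I would exhibit a codeword $\v\in D$ of Hamming weight $n$ for every $n\geq 3$. If $n$ is even, then $\v=(1,1,\ldots,1)$ works because $n\cdot 1=0$ in characteristic $2$. If $n$ is odd, take
\begin{equation*}
\v=(\om,\ovom,\underbrace{1,1,\ldots,1}_{n-2\ \text{entries}}),
\end{equation*}
whose sum in $\F_{4}$ equals $\om+\ovom+(n-2)\cdot 1=1+1=0$, using that $n-2$ is odd. In either case all entries of $\v$ are nonzero, so $\wth_{H}(\v)=n$.

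Second, I would set $C=\{0,\v,\om\v,\ovom\v\}$, the $\F_{4}$-linear $[n,1,n]_{4}$-subcode of $D$ generated by $\v$. Because $C$ is $\F_{4}$-linear one has $C^{\perp_{\tr}}=C^{\perp_{\H}}$, which is an $[n,n-1,2]_{4}$-code: a standard basis vector $e_{i}$ cannot lie in $C^{\perp_{\H}}$ since $\langle e_{i},\v\rangle_{\H}=v_{i}^{2}\neq 0$, so the minimum distance is at least $2$, while the weight-$2$ vector $v_{j}^{2}e_{i}+v_{i}^{2}e_{j}$ is readily checked to be orthogonal to $\v$, giving equality.

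Third, I would apply Theorem~\ref{thm:main} with $C_{1}^{\perp_{\tr}}=C$ (so $d_{1}=d(C^{\perp_{\tr}})=2$ and $|C_{1}^{\perp_{\tr}}|=4$) and $C_{2}=D$ (so $d_{2}=2$ and $|C_{2}|=4^{n-1}$). The nestedness $C\subseteq D$ holds because $\v\in D$ and $D$ is $\F_{4}$-linear. The theorem then produces an asymmetric quantum code with parameters $((n,4^{n-1}/4,2/2))_{4}=[[n,n-2,2/2]]_{4}$, and the equality $k=n-d_{x}-d_{z}+2$ is verified, so the code is MDS. The only nontrivial step in this plan is the parity-by-parity construction of the weight-$n$ codeword in $D$; everything else follows mechanically from the framework already developed.
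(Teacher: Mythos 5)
Your proof is correct and follows essentially the same route as the paper: the paper also takes $D$ to be the $[n,n-1,2]_4$ parity-check code (realized there as the cyclic code with generator polynomial $X+1$), sets $C$ to be the $[n,1,n]_4$-code spanned by a weight-$n$ codeword of $D$, and applies Theorem~\ref{thm:main} directly without invoking the map $S$. The only difference is that where you exhibit the weight-$n$ codeword explicitly by a parity-of-$n$ case split, the paper instead cites a result on the weight spectrum of MDS codes (\cite{EGS09}); your version is a bit more self-contained but the construction is the same.
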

\begin{proof}
A general proof for the existence of an 
$\mathbf{[[n,n-2,2/2]]_{q}}$-asymmetric quantum MDS code is already 
given in~\cite[Cor. 3.4]{WFLX09}. Here we present a simple 
constructive proof for $q=4$. A cyclic code $D$ with parameters 
$[n,n-1,2]_4$ can be constructed by using $X+1$ as its generator 
polynomial. Its minimum distance is two since the check polynomial 
is $1+X+\ldots+X^{n-1}$. By~\cite[Th. 1]{EGS09}, $D$ has codewords 
of length $n$. One such codeword can be chosen to form an 
$[n,1,n]_{4}$-code $C$. Applying Theorem~\ref{thm:main} with 
$C_{1}^{\perp_{\tr}}=C$ and $C_{2}=D$ brings us to the conclusion.
\end{proof}

For a fixed $n$, it is not guaranteed that for all 
$k \in \left\lbrace 2,\ldots,n-2 \right\rbrace$, the best-known 
linear code with parameters $[n,k,d]_{4}$ has codewords of weight $n$. 
For example, there is no codeword of weight 6 in the best-known 
$[6,4,2]_{4}$-code stored in the database of MAGMA that we use here.

Table~\ref{table:BKLC3to20} lists down 
the resulting quantum codes for $n=4$ to $n=20$ based on the list of 
best-known linear codes with parameters $[n,k]_{4}$ invoked under 
the command BKLC in MAGMA. We exclude the case of $k=n-1$ 
in light of Theorem~\ref{thm:MDSAsym} and the case of $k=1$ due 
to~\cite[Ex. 8.2]{ELS10}. The process can of course be done for larger 
values of $n$ if so desired. Interested readers may contact the first 
author for the complete list of codes $Q$ and $Q_{S}$ with $d_{z} \geq 
d_{x}=2$ which are derived from the best-known linear codes for up to $n=46$.

\begin{table}[ht!]
\caption{Asymmetric QECC from BKLC}
\label{table:BKLC3to20}
\centering
\begin{tabular}{| c | l | l || c | l | l |}
\hline\hline
\textbf{$n$} & \textbf{Code $Q$} & \textbf{Code $Q_{S}$} &
\textbf{$n$} & \textbf{Code $Q$} & \textbf{Code $Q_{S}$} \\
\hline
$4$ & $\mathbf{[[4,1,3/2]]_{4}}$  & $[[8,1,6/2]]_{4}$	 & $15$ & $[[15,7,6/2]]_{4}$  &	$[[30,7,12/2]]_{4}$ \\
$5$ & $\mathbf{[[5,2,3/2]]_{4}}$  & $[[10,2,6/2]]_{4}$   &      & $[[15,8,5/2]]_{4}$  &	$[[30,8,10/2]]_{4}$ \\
$6$ & $\mathbf{[[6,2,4/2]]_{4}}$  & $[[12,2,8/2]]_{4}$   &      & $[[15,10,4/2]]_{4}$ &	$[[30,10,8/2]]_{4}$ \\
$7$ & $[[7,2,4/2]]_{4}$	      &	$[[14,2,8/2]]_{4}$ &    & $[[15,11,3/2]]_{4}$	      &	$[[30,11,6/2]]_{4}$ \\
    & $[[7,3,3/2]]_{4}$	      &	$[[14,3,6/2]]_{4}$ & $16$ & $[[16,2,12/2]]_{4}$	      &	$[[32,2,24/2]]_{4}$ \\

$8$ & $[[8,1,6/2]]_{4}$       & $[[16,1,12/2]]_{4}$ &    & $[[16,3,11/2]]_{4}$	      &	$[[32,3,22/2]]_{4}$ \\
  & $[[8,2,5/2]]_{4}$	      &	$[[16,2,10/2]]_{4}$ &    & $[[16,6,8/2]]_{4}$	      &	$[[32,6,16/2]]_{4}$ \\
  & $[[8,3,4/2]]_{4}$	      &	$[[16,3,8/2]]_{4}$ &    & $[[16,7,7/2]]_{4}$	      &	$[[32,7,14/2]]_{4}$ \\
  & $[[8,4,3/2]]_{4}$	      & $[[16,4,6/2]]_{4}$ &    & $[[16,8,6/2]]_{4}$	      &	$[[32,8,12/2]]_{4}$ \\
$9$ & $[[9,2,6/2]]_{4}$	      &	$[[18,2,12/2]]_{4}$ &    & $[[16,9,5/2]]_{4}$	      &	$[[32,9,10/2]]_{4}$ \\	

  & $[[9,3,5/2]]_{4}$	      &	$[[18,3,10/2]]_{4}$ &    & $[[16,11,4/2]]_{4}$	      &	$[[32,11,8/2]]_{4}$ \\
  & $[[9,4,4/2]]_{4}$	      &	$[[18,4,8/2]]_{4}$  &    & $[[16,12,3/2]]_{4}$	      &	$[[32,12,6/2]]_{4}$ \\
  & $[[9,5,3/2]]_{4}$	      &	$[[18,5,6/2]]_{4}$  & $17$ & $[[17,5,9/2]]_{4}$	      & $[[34,5,18/2]]_{4}$ \\
$10$ & $[[10,3,6/2]]_{4}$     & $[[20,3,12/2]]_{4}$ &    & $[[17,8,7/2]]_{4}$	      & $[[34,8,14/2]]_{4}$ \\
   & $[[10,4,5/2]]_{4}$	      & $[[20,4,10/2]]_{4}$ &    & $[[17,9,6/2]]_{4}$	      & $[[34,9,12/2]]_{4}$ \\

   & $[[10,5,4/2]]_{4}$	      & $[[20,5,8/2]]_{4}$  &    & $[[17,10,5/2]]_{4}$	      & $[[34,10,10/2]]_{4}$ \\
   & $[[10,6,3/2]]_{4}$	      & $[[20,6,6/2]]_{4}$  &    & $[[17,12,4/2]]_{4}$	      & $[[34,12,8/2]]_{4}$ \\
$11$ & $[[11,2,7/2]]_{4}$     & $[[22,2,14/2]]_{4}$ &    & $[[17,13,3/2]]_{4}$	      & $[[34,13,6/2]]_{4}$ \\
   & $[[11,4,6/2]]_{4}$	      & $[[22,4,12/2]]_{4}$ & $18$ & $[[18,5,10/2]]_{4}$	      & $[[36,5,20/2]]_{4}$ \\
   & $[[11,5,5/2]]_{4}$	      & $[[22,5,10/2]]_{4}$ &    & $[[18,6,9/2]]_{4}$             & $[[36,6,18/2]]_{4}$ \\

   & $[[11,6,4/2]]_{4}$	      & $[[22,6,8/2]]_{4}$  &    & $[[18,8,8/2]]_{4}$         & $[[36,8,16/2]]_{4}$ \\
   & $[[11,7,3/2]]_{4}$	      & $[[22,7,6/2]]_{4}$  &    & $[[18,10,6/2]]_{4}$         & $[[36,10,12/2]]_{4}$ \\
$12$ & $[[12,2,8/2]]_{4}$     & $[[24,2,16/2]]_{4}$ &    & $[[18,11,5/2]]_{4}$         & $[[36,11,10/2]]_{4}$ \\
   & $[[12,3,7/2]]_{4}$	      & $[[24,3,14/2]]_{4}$ &    & $[[18,12,4/2]]_{4}$         & $[[36,12,8/2]]_{4}$ \\
   & $[[12,5,6/2]]_{4}$	      & $[[24,5,12/2]]_{4}$ &    & $[[18,14,3/2]]_{4}$         & $[[36,14,6/2]]_{4}$ \\

   & $[[12,7,4/2]]_{4}$	      & $[[24,7,8/2]]_{4}$  & $19$ & $[[19,4,11/2]]_{4}$         & $[[38,4,22/2]]_{4}$ \\
   & $[[12,8,3/2]]_{4}$	      & $[[24,8,6/2]]_{4}$  &    & $[[19,5,10/2]]_{4}$         & $[[38,5,20/2]]_{4}$ \\
$13$ & $[[13,2,9/2]]_{4}$     & $[[26,2,18/2]]_{4}$ &    & $[[19,6,9/2]]_{4}$         & $[[38,6,18/2]]_{4}$ \\
   & $[[13,4,7/2]]_{4}$	      & $[[26,4,14/2]]_{4}$ &    & $[[19,8,8/2]]_{4}$         & $[[38,8,16/2]]_{4}$ \\
   & $[[13,5,6/2]]_{4}$	      & $[[26,5,12/2]]_{4}$ &    & $[[19,9,7/2]]_{4}$         & $[[38,9,14/2]]_{4}$ \\

   & $[[13,6,5/2]]_{4}$	      & $[[26,6,10/2]]_{4}$ &    & $[[19,11,6/2]]_{4}$         & $[[38,11,12/2]]_{4}$ \\
   & $[[13,8,4/2]]_{4}$	      & $[[26,8,8/2]]_{4}$  &    & $[[19,12,5/2]]_{4}$         & $[[38,12,10/2]]_{4}$ \\
   & $[[13,9,3/2]]_{4}$	      & $[[26,9,6/2]]_{4}$  &    & $[[19,13,4/2]]_{4}$         & $[[38,13,8/2]]_{4}$ \\
$14$ & $[[14,2,10/2]]_{4}$    & $[[28,2,20/2]]_{4}$ &    & $[[19,15,3/2]]_{4}$         & $[[38,15,6/2]]_{4}$ \\
   & $[[14,3,9/2]]_{4}$	      & $[[28,3,18/2]]_{4}$ & $20$ & $[[20,4,12/2]]_{4}$         & $[[40,4,24/2]]_{4}$ \\

   & $[[14,4,8/2]]_{4}$	      & $[[28,4,16/2]]_{4}$ &    & $[[20,5,11/2]]_{4}$         & $[[40,5,22/2]]_{4}$ \\
   & $[[14,5,7/2]]_{4}$	      & $[[28,5,14/2]]_{4}$ &    & $[[20,6,10/2]]_{4}$         & $[[40,6,20/2]]_{4}$ \\
   & $[[14,6,6/2]]_{4}$	      & $[[28,6,12/2]]_{4}$ &    & $[[20,7,9/2]]_{4}$         & $[[40,7,18/2]]_{4}$ \\
   & $[[14,7,5/2]]_{4}$	      & $[[28,7,10/2]]_{4}$ &    & $[[20,9,8/2]]_{4}$         & $[[40,9,16/2]]_{4}$ \\
   & $[[14,9,4/2]]_{4}$	      & $[[28,9,8/2]]_{4}$  &    & $[[20,10,7/2]]_{4}$         & $[[40,10,14/2]]_{4}$ \\

   & $[[14,10,3/2]]_{4}$      & $[[28,10,6/2]]_{4}$ &    & $[[20,12,6/2]]_{4}$         & $[[40,12,12/2]]_{4}$ \\
$15$ & $[[15,2,11/2]]_{4}$    & $[[30,2,22/2]]_{4}$ &    & $[[20,13,5/2]]_{4}$         & $[[40,13,10/2]]_{4}$ \\
   & $[[15,3,10/2]]_{4}$      &	$[[30,3,20/2]]_{4}$ &    & $[[20,14,4/2]]_{4}$         & $[[40,14,8/2]]_{4}$ \\
   & $[[15,6,7/2]]_{4}$	      &	$[[30,6,14/2]]_{4}$ &    & $[[20,16,3/2]]_{4}$         & $[[40,16,6/2]]_{4}$ \\
\hline\hline
\end{tabular}
\end{table}

\begin{remark}
Aside from its nice structural property, 
the advantage of using the mapping $S$ can be seen, 
for instance, from the fact that we have the $[[18,2,12/2]]_{4}$-code 
$Q_{S}$ which cannot be derived directly from the best-known 
linear codes for $n=18$. Similarly for the following $Q_{S}$ codes: 
$[[30,2,22/2]]_{4}$, $[[30,3,20/2]]_{4}$, $[[32,3,22/2]]_{4}$, 
$[[38,4,22/2]]_{4}$, $[[40,4,24/2]]_{4}$, $[[42,4,26/2]]_{4}$, 
$[[44,4,28/2]]_{4}$, and $[[46,4,28/2]]_{4}$.
\end{remark}

\subsection{Construction from concatenated Reed-Solomon codes}
\label{subsec:RSCodes}
Let $m$ be a positive integer. Concatenation is used to obtain 
codes over $\F_{q}$ from codes over an extension $\F_{q^{m}}$ 
of $\F_{q}$. A general method of performing concatenation is 
presented in~\cite[Sec. 6.3]{LX04} and in~\cite[Ch. 10]{MS77}.

Our strategy here is to construct nested codes $C \subset D$ 
over $\F_{4}$ from nested codes $A \subset B$ over $\F_{4^{m}}$. 
We then use the codes $C$ and $D$ and the mapping $S$ to get 
a quantum code $Q$.

The field $\F_{4^{m}}$ can be 
viewed as an $\F_{4}$-vector space with basis 
$\left\lbrace \beta_{1} \ldots,\beta_{m}\right\rbrace $. 
Then, an element $x \in \F_{4^{m}}$ can be written uniquely as

\begin{equation*}
x = \sum_{j=1}^{m} a_{j} \beta_{j} \text{ with } a_{j} \in \F_{4} \text{.}
\end{equation*}

We define a mapping 
$\phi : \F_{4^{m}}\rightarrow\F_{4}^{m}$ given by 
$x \mapsto (a_{1},\ldots,a_{m})$. This mapping is a bijective 
$\F_{4}$-linear transformation and extends naturally to 
the mapping $\phi^{*}$

\begin{equation}\label{eq:5.9}
\begin{aligned}
\phi^{*} :\quad &\F_{4^{m}}^{N}\rightarrow\F_{4}^{mN}\\
&(x_{1},\ldots,x_{n})\mapsto(\phi(x_{1}),\ldots,\phi(x_{n})) \text{.}
\end{aligned}
\end{equation}

If $A$ is an $[N,K,D]_{4^{m}}$-code and letting $C = \phi^{*}(A)$, 
then it is easy to verify that $C$ is an $[mN,mK,\geq D]_{4}$-code. 
Moreover, the mapping $\phi^{*}$ preserves nestedness by its 
$\F_{4}$-linearity. That is, if an $[N,K_{1},D_{1}]_{4^{m}}$-code $A$ 
is a subcode of an $[N,K_{2},D_{2}]_{4^{m}}$-code $B$, 
then $C=\phi^{*}(A)$ is a subcode of $D=\phi^{*}(B)$ as codes over $\F_{4}$.

Let $q=4^{m}$ and $\alpha_{1},\ldots,\alpha_{q-1}$ be the 
nonzero elements of $\F_{q}$. It is well-known (see, 
e.g.,~\cite[Ch. 10 and Ch.11]{MS77}) that the $[q,k,q-k+1]_{q}$-extended 
Reed-Solomon (henceforth, RS) code $B$ has a parity check matrix
\begin{equation}\label{eq:5.10}
H=\left(
\begin{array}{*{12}{l}}
1                   & 1                  & \ldots   & 1                    & 1 \\
\alpha_{1}          & \alpha_{2}         & \ldots   & \alpha_{q-1}         & 0 \\
\alpha_{1}^{2}      & \alpha_{2}^{2}     & \ldots   & \alpha_{q-1}^{2}     & 0 \\
\vdots	            & \vdots             & \ddots   & \vdots               & \vdots \\
\alpha_{1}^{q-k-1}  & \alpha_{2}^{q-k-1} & \ldots   & \alpha_{q-1}^{q-k-1} & 0 
\end{array}
\right).
\end{equation}

Let $A$ be the $[q,1,q]_{q}$-repetition code generated by 
$\textbf{1}=(1,\ldots,1)$. For $1 \leq j \leq q-2$, the sum 
$s=\sum_{l=1}^{q-1}\alpha_{l}^{j}=0$. To see this, we choose 
$\alpha$ a primitive element of $\F_{q}$. Then 
$\alpha^{j}s=\sum_{l=1}^{q-1}(\alpha \cdot \alpha_{l})^{j}=s$. 
Since $\alpha^{j} \neq 1$, we conclude that $s=0$. 
This implies that $A \subset B$.

Note that we can choose an $\F_{4}$-basis 
$\left\lbrace \beta_{1} \ldots,\beta_{m}\right\rbrace$ of 
$\F_{q}$ such that a generator matrix of $C'=\phi^{*}(A)$ is 
given by the $m \times mq$ matrix 
$G=\left( I_{m}|I_{m}|\ldots|I_{m}\right )$ where $I_{m}$ is 
the $m \times m$ identity matrix. Hence, $C'$ is of parameters 
$[mq,m,q]_4$. Define $C$ to be the $[mq,1,mq]_4$-repetition 
code subset of $C'$. This is valid since we know that 
$\textbf{1}=(1,\ldots,1)\in C'$. The code $D=\phi^{*}(B)$ 
is an $[mq,mk,d' \geq (q-k+1)]_{4}$-code that contains $C$. 
Repeating the proof of Theorem~\ref{theorem:6.1} 
yields the following result.

\begin{theorem}\label{theorem:7.3}
Let $m$ be a positive integer, $q=4^{m}$, and $1 \leq k \leq q$. 
Then there exists a $[[2mq,mk-1,(\geq 2(q-k+1))/2]]_{4}$-asymmetric 
quantum code $Q$.
\end{theorem}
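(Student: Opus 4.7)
The plan is to assemble the nested pair $C \subseteq D$ over $\F_4$ exactly as set up in the paragraphs preceding the statement, and then invoke Theorem~\ref{theorem:6.1}. All the structural ingredients have already been arranged: the extended Reed-Solomon code $B$ of parameters $[q,k,q-k+1]_q$ contains the repetition code $A=[q,1,q]_q$ (the key sum $\sum_{l=1}^{q-1}\alpha_l^j=0$ for $1 \leq j \leq q-2$ forces $\mathbf{1}$ to satisfy every parity-check row in $H$), and the concatenation map $\phi^*$ is $\F_4$-linear and nestedness-preserving, so $\phi^*(A)\subseteq\phi^*(B)=D$ as $\F_4$-codes.

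First I would fix the basis $\{\beta_1,\ldots,\beta_m\}$ of $\F_q$ so that $\phi^*(A)$ has the generator matrix $(I_m\mid I_m \mid \cdots \mid I_m)$ described just before the theorem; this guarantees that $\mathbf{1}=(1,\ldots,1)\in\phi^*(A)\subseteq D$. I would then let $C$ be the $[mq,1,mq]_4$-repetition code generated by $\mathbf{1}$, so that $C\subseteq D$. The code $D$ itself inherits parameters $[mq,mk,d']_4$ with $d'\geq q-k+1$ from the concatenation.

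Next I would check the one hypothesis of Theorem~\ref{theorem:6.1} that requires a small verification, namely $d(C^{\perp_{\tr}})\geq 2$. Since $C$ is the repetition code over $\F_4$, a vector $\mathbf{e}_i \in \F_4^{mq}$ with a single nonzero entry $a$ pairs with $\mathbf{1}$ under $\langle\cdot,\cdot\rangle_{\tr}$ to give $a+a^2$, which is nonzero for $a\in\{\omega,\overline{\omega}\}$. Hence no weight-one vector lies in $C^{\perp_{\tr}}$, exactly as already noted for the repetition example in Section~\ref{sec:AnalysisWE}, so $d(C^{\perp_{\tr}})\geq 2$.

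With the hypothesis verified, I would apply Theorem~\ref{theorem:6.1} to the pair $C\subseteq D$. The resulting asymmetric quantum code has length $2\cdot mq$, dimension
\begin{equation*}
\log_4\left(\frac{|D|}{|C|}\right) = \log_4\left(\frac{4^{mk}}{4^1}\right) = mk-1,
\end{equation*}
and distance pair $(2d',2)$ with $2d'\geq 2(q-k+1)$, yielding the claimed $[[2mq,mk-1,(\geq 2(q-k+1))/2]]_4$ parameters. I do not expect a genuine obstacle here: every nontrivial ingredient (the subcode relation $A\subseteq B$, the parameters of the concatenated code, and the self-orthogonality consequences of the map $S$) has already been established, so the proof is essentially bookkeeping on top of Theorem~\ref{theorem:6.1}. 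The only point that requires a moment's care is the appropriate choice of $\F_4$-basis of $\F_q$ so that the all-ones vector of $C'=\phi^*(A)$ over $\F_4$ remains a codeword, ensuring $C\subseteq C'\subseteq D$.
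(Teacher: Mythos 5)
Your proposal follows essentially the same route as the paper, which simply sets up $C\subseteq D=\phi^{*}(B)$ via the concatenation of the nested extended RS pair $A\subset B$ and then "repeats the proof of Theorem~\ref{theorem:6.1}," so your dimension and distance bookkeeping ($\log_4(4^{mk}/4)=mk-1$, $d_z=2d'\geq 2(q-k+1)$, $d_x=2$) matches the paper's argument. One small repair in your check that $d(C^{\perp_{\tr}})\geq 2$: for a weight-one vector with entry $a=1$ the pairing with $\mathbf{1}$ gives $a+a^{2}=0$, so that case is not excluded by your computation; you must also pair against $\omega\mathbf{1}\in C$, which yields $\overline{\omega}+\omega=1\neq 0$, and with that the verification is complete.
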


\begin{remark}\label{remark:7.4}
For a specific value of $m$ and a given basis 
$\left\lbrace \beta_{1},\ldots,\beta_{m}\right\rbrace $, 
$d'=d(D)$ can be explicitly computed. As noted in~\cite[Ch. 10]{MS77}, 
a change of basis may change the weight distribution and even 
the minimum weight of the code $D$.
\end{remark}

\begin{example}\label{example:7.5}
For $m=2$ and $1 \leq k \leq 16$ we get the 
$[[64,k',d_{z}/2]]_{4}$-quantum codes listed in Table~\ref{table:RS16}.

\begin{table}[h]
\caption{$[[64,k',d_{z}/2]]_{4}$-code $Q$ from $[16,k,16-k+1]_{16}$-extended RS codes}
\label{table:RS16}
\centering
\begin{tabular}{| c | l | l | l | l | l | l | l | l | l |} 
\hline\hline
$k$          & $1$  & $2$  & $3$  & $4$  & $5$  & $6$  & $7$  & $8$\\
$k'$         & $1$  & $3$  & $5$  & $7$  & $9$  & $11$ & $13$ & $15$\\
$d_{z} \geq$ & $32$ & $30$ & $28$ & $26$ & $24$ & $22$ & $20$ & $18$\\
\hline\hline
$k$          & $9$   & $10$  & $11$  & $12$ & $13$ & $14$ & $15$ & $16$\\
$k'$         & $17$  & $19$  & $21$  & $23$ & $25$ & $27$ & $29$ & $31$\\
$d_{z} \geq$ & $16$  & $14$  & $12$  & $10$ & $8$  & $6$  & $4$  & $2$\\
\hline\hline
\end{tabular}
\end{table}
\end{example}

\begin{example}\label{example:6.6}
For $m=3$ and $1 \leq k \leq 64$ we get the 
$[[384,k',d_{z}/2]]_{4}$-quantum codes listed in Table~\ref{table:RS64}.

\begin{table}[h]
\caption{$[[384,k',d_{z}/2]]_{4}$-code $Q$ from $[64,k,64-k+1]_{64}$-extended RS codes}
\label{table:RS64}
\centering
\begin{tabular}{| c | l | l | l | l | l | l | l | l | l |} 
\hline\hline 
$k$          & $1$   & $2$  & $3$  & $4$  & $5$  & $6$  & $7$  & $8$\\
$k'$         & $2$   & $5$  & $8$  & $11$  & $14$ & $17$ & $20$ & $23$\\
$d_{z} \geq$ & $128$ & $126$  & $124$  & $122$   & $120$  & $118$  & $116$  & $114$\\
\hline\hline 
$k$          & $9$   & $10$  & $11$  & $12$ & $13$ & $14$ & $15$ & $16$\\
$k'$         & $26$  & $29$  & $32$  & $35$ & $38$ & $41$ & $44$ & $47$\\
$d_{z} \geq$ & $112$   & $110$   & $108$   & $106$  & $104$  & $102$ & $100$ & $98$\\
\hline\hline 
$k$          & $17$  & $18$  & $19$  & $20$  & $21$  & $22$ & $23$ & $24$\\
$k'$         & $50$  & $53$  & $56$  & $59$  & $62$  & $65$ & $68$ & $71$\\
$d_{z} \geq$ & $96$    & $94$    & $92$    & $90$    & $88$    & $86$   & $84$   & $82$\\
\hline\hline 
$k$          & $25$  & $26$  & $27$  & $28$ & $29$ & $30$ & $31$ & $32$\\
$k'$         & $74$  & $77$  & $80$  & $83$ & $86$ & $89$ & $92$ & $95$\\ 
$d_{z} \geq$ & $80$    & $78$    & $76$    & $74$   & $72$   & $70$   & $68$   & $66$\\
\hline\hline 
$k$          & $33$    & $34$     & $35$     & $36$     & $37$     & $38$    & $39$    & $40$\\
$k'$         & $98$  & $101$  & $104$  & $107$  & $110$  & $113$ & $116$ & $119$\\ 
$d_{z} \geq$ & $64$    & $62$     & $60$     & $58$     & $56$     & $54$    & $52$    & $50$\\
\hline\hline 
$k$          & $41$     & $42$     & $43$     & $44$     & $45$     & $46$    & $47$    & $48$\\
$k'$         & $122$  & $125$  & $128$  & $131$  & $134$  & $137$ & $140$ & $143$\\ 
$d_{z} \geq$ & $48$     & $46$     & $44$     & $42$     & $40$     & $38$    & $36$    & $34$\\
\hline\hline 
$k$          & $49$     & $50$     & $51$     & $52$     & $53$     & $54$    & $55$    & $56$\\
$k'$         & $146$  & $149$  & $152$  & $155$  & $158$  & $161$ & $164$ & $167$\\ 
$d_{z} \geq$ & $32$     & $30$     & $28$     & $26$     & $24$     & $22$    & $20$    & $18$\\
\hline\hline 
$k$          & $57$   & $58$  & $59$  & $60$ & $61$ & $62$ & $63$ & $64$\\
$k'$         & $170$  & $173$  & $176$  & $179$ & $182$ & $185$ & $188$ & $191$\\ 
$d_{z} \geq$ & $16$  & $14$  & $12$  & $10$ & $8$  & $6$  & $4$  & $2$\\
\hline\hline 
\end{tabular}
\end{table}
\end{example}

\section{Conclusions and Open Problems}\label{sec:Conclusion}
In this paper we have given a special construction of asymmetric quantum
codes. An analysis on the weight enumerators of the resulting quantum 
codes is also presented. It seems that the construction is especially 
useful when the constraint on $d_{x}$ is minimal and the demand on 
$d_{z}$ is critical.

This allows us to give a more general criterion to use in choosing 
a pair of $\F_{4}$-linear codes $C \subset D$ that, in some cases, 
yields asymmetric quantum codes with improved parameters compared to 
those listed in~\cite{ELS10}. Many new asymmetric quantum codes are 
also found.

There are direct generalizations of the mapping $S$. One direction might be to
use non-quadratic extensions. Another one is to generalize it to fields of 
odd characteristics. The latter might be more promising than the former.

\section*{Acknowledgment}\label{sec:Acknowledge}
The work of M.~F.~Ezerman was carried out under the Nanyang Technological 
University PhD Research Scholarship. The work of S.~Ling and P.~Sol{\'e} was 
partially supported by Singapore National Research Foundation Competitive 
Research Program grant NRF-CRP2-2007-03 and by the Merlion Programme 01.01.06. 
P.~Sol{\'e} acknowledges the hospitality of the Department of Mathematics at 
El Manar Tunis where part of the research was done. Likewise, O.~Yemen 
is grateful for the hospitality she experienced at the I3S-CNRS Laboratory 
at Sophia Antipolis. Her work was supported by the Algebra and Number Theory 
Laboratory 99/UR/15-18, the Faculty of Sciences of Tunis.


\medskip
Received xxxx 20xx; revised xxxx 20xx.
\medskip

\end{document}